\newtheorem{theorem}{Theorem}
\newtheorem{lemma}{Lemma}
\newtheorem{definition}{Definition}
\newtheorem{remark}{Remark}
\newtheorem{assumption}{Assumption}
\newcommand{\red}[1]{\textcolor{black}{#1
}}
\renewcommand{\emptyset}{\varnothing}
\def\mf{\mathbf}
\def\mc{\mathcal}
\def\beq{\begin{equation*}}
\def\eeq{\end{equation*}}
\def\bql{\begin{equation}}
\def\eql{\end{equation}}
\def\bqn{\begin{eqnarray*}}
\def\eqn{\end{eqnarray*}}
\def\bnl{\begin{eqnarray}}
\def\enl{\end{eqnarray}}
\DeclareMathOperator{\sgn}{sgn} 
\newcommand{\att}[1]{$A_{#1}$}
\def\nufast{\nu_\text{fast}}
\def\nuslow{\nu_\text{slow}}
\def\nutrue{\nu}
\def\speed{v}
\def\x{\mf{x}}
\newcommand{\pos}[1]{\mf{x}_{#1}}
\newcommand{\vel}[1]{\mf{v}_{#1}}
\def\rai{r_{A_i}}
\def\widthone{w_\text{1v1}}
\def\dang{\theta_{T}}
\def\reachone{\mc{R}_\text{1v1}}
\def\reachtwo{\mc{R}_\text{2v1}}
\def\intersectone{\mc{I}_\text{1v1}} 
\def\intersecttwo{\mc{I}_\text{2v1}} 
\def\unionone{\mc{U}_\text{1v1}} 
\def\uniontwo{\mc{U}_\text{2v1}} 
\newcommand{\region}[1]{\Theta_{#1}}
\def\cone{\theta_{\mathcal{I}_1}}
\def\wone{w_{\mathcal{I}_1}}
\def\dcone{\dot{\theta}_{\mathcal{I}_1}}
\def\dwone{\dot{w}_{\mathcal{I}_1}}
\def\AttackerStrategy{SS-FH strategy}
\def\vone{V_\text{1v1}}
\def\vtwo{V_\text{2v1}}
\def\Runnercap{\tilde{\theta}}
\begin{document}

\title{
Deception in Differential Games: Information Limiting Strategy to Induce Dilemma
}

\author{Daigo Shishika$^1$, Alexander Von Moll$^2$, Dipankar Maity$^3$, Michael Dorothy$^4$
\thanks{
We gratefully acknowledge the support of DEVCOM ARL grant ARL DCIST CRA W911NF-17-2-0181. The views expressed in this paper are those of the authors and do not reflect the official policy or position of the United States Government, Department of Defense, or its components.
}
\thanks{
D. Shishika is with the Department of Mechanical Engineering, George Mason University, Fairfax, VA, 22030, USA. 
Email: 		{ dshishik@gmu.edu}
}
\thanks{
A.~Von Moll is with the Control Science Center,  Air Force Research Laboratory, WPAFB, OH, 45433, USA. 
Email: { alexander.von\_moll@us.af.mil}
}
\thanks{D. Maity is with the Department of Electrical and Computer Engineering, University of North Carolina at Charlotte,  NC, 28223, USA.
Email: 		{{dmaity@charlotte.edu}}
}
\thanks{
M. Dorothy is with the Army Research Directorate, DEVCOM Army Research Laboratory, APG, MD, 20783, USA. 
Email:  {michael.r.dorothy.civ@army.mil}
 }        
}



\maketitle

\begin{abstract}
Can deception exist in differential games?
We provide a case study for a Turret-Attacker differential game, where two Attackers seek to score points by reaching a target region while a Turret tries to minimize the score by aligning itself with the Attackers before they reach the target.
In contrast to the original problem solved with complete information, we assume that the Turret only has partial information about the maximum speed of the Attackers.
We investigate whether there is any incentive for the Attackers to move slower than their maximum speed in order to ``deceive'' the Turret into taking suboptimal actions.
We first describe the existence of a dilemma that the Turret may face.
Then we derive a set of initial conditions from which the Attackers can force the Turret into a situation where it must take a guess.
\end{abstract}

\begin{IEEEkeywords}
Differential games, Deception, Incomplete information.
\end{IEEEkeywords}

\section{Introduction}

\IEEEPARstart{D}{ifferential} games formulated for agents with first-order dynamics leads to ``simple'' deterministic control as the optimal strategies \cite{isaacs1965differential, shishika2018local-game, dorothy2021one, yan2023multiplayer}.
In many cases, the agent heads straight towards the optimal point at its maximum speed, which stems from the Mayer-type payoff functional used to model the agent incentive \cite{vonmoll2022circular}.
Any deviation from this strategy generates suboptimality in terms of the time or the distance metric used to define the payoff function.

Aside from the payoff structure, one of the main reasons behind this common result is the assumption of the perfect and complete information: i.e., knowledge about the opponent's state (position), its intention (payoff), and capability (speed).
Due to such strong assumptions on the informational structure, the optimal or equilibrium strategies are often clearly defined based on the instantaneous snapshot of the scenario, which has no dependencies on the history of the states.
In other words, a player cannot use its action to influence the decision of its opponent who is employing an equilibrium strategy.

Concerning uncertainty in capability, reference~\cite{nath2022two-phase} analyzed a scenario in which an Evader wishes to delay capture by a faster Pursuer for as long as possible but does not know the Pursuer's (finite) turn radius.
A heuristic evasion policy is presented therein which makes no assumption about the Pursuer's turning capability and only uses available information (relative position, maximum speeds).
Since the Evader is not actively trying to estimate the Pursuer's turning radius there is no opportunity for the Pursuer to influence (i.e., deceive) the Evader.

Regarding uncertainty in the state of the system, this could be classified under the umbrella of stochastic differential games wherein one is concerned with \textit{expected} payoff.
References~\cite{yavin1986tracking,yavin1987pursuit-evasion} contain a particularly relevant example in which a Camera is tasked with tracking a mobile agent, the Evader.
The Evader has some control over its motion but there is also a stochastic process added to its dynamics (which could represent, for example, wind).
Generally, in stochastic differential games, the information structure is still symmetric, meaning both agents have the same uncertainty over the dynamics of the system.

Whenever agents within a system do not all possess the same information (i.e., there is some information asymmetry) signaling becomes a critical consideration. 
%
Deception may be classified as a type of signaling in which the information transmitted to other agents is not entirely truthful or the intended perception is not an accurate representation of reality.
The idea of deception is important in both civilian and military security applications.
It has been studied extensively in cyber-physical systems (CPS)
involving Attacker and Defender~\cite{etesami2019dynamic,sayin2021deception-as-defense}.
%
For scenarios involving mobile agents, the agents' control, and its effects on the state of the system, adds to the complexity of the type and nature of information that is made available to an adversary~\cite{rubinovich2020alternate,yavin1987pursuit-evasion}.
In~\cite{yavin1987pursuit-evasion}, the authors consider a pursuit-evasion problem wherein the Evader, in addition to having control over its dynamics is able to affect to the Pursuer's perception of the system state in various ways by introducing noise or false signals.
This may be seen as a deception by means of a sensor attack.

We are interested in investigating the idea of \emph{deception by motion}, where the information leaked from one player is tightly coupled to its behavior in the physical world.
More specifically, for motion control problems, the agent's intention as well as its motion capability are tightly coupled with its trajectory.
This coupling leads us to study the trade-off between the \emph{positional advantage} and the \emph{informational advantage}.

To explore the types and nature of deception in differential games, or even its existence, we consider a variant of the Turret-Attacker differential game~\cite{vonmoll2022turret-runner-penetrator}, in which there is an informational asymmetry.
Specifically, we consider a scenario where the Turret does not have precise information about the Attackers' maximum speed.
As is the case with many other pursuit-evasion games, relative maximum speed information is critical in selecting the optimal strategy.
Therefore, there may be an incentive for the Attacker to move slowly and hide its true capability to maintain \emph{informational advantage}.
On the other hand, the literature in pursuit-evasion games have shown that it is almost always the case that the players should move at the maximum speed to retain \emph{positional advantage}.
This paper investigates the interaction between these potentially conflicting goals, and demonstrate that it is indeed possible to have a case where the Attacker benefits from moving at the speed slower than its maximum.

The contributions of this paper are:
(i) demonstration of a scenario where the Attacker benefits from playing a deceptive / information-limiting strategy;
(ii) characterization of the initial conditions that can lead to deception (i.e., the Defender facing dilemma); and 
(iii) derivation of the strategies for the Defender (resp.~Attacker) to avoid (resp.~induce) dilemma.
We believe that the notion of ``information-limiting'' strategy extends to a broader class of asymmetric information games beyond those related to unknown speeds (to be formalized in Sec.~\ref{sec:information_limiting}).
Our work serves as a starting point to identify conditions that lead to the existence of deception by motion.


\subsection{Related Work}

\paragraph*{Signaling through Action}
A number of existing works have considered situations where one's actions become a signal for his opponent, within traditional discrete game settings.
In the setting of Colonel Blotto games,~\cite{fuchs2012sequential} described a game where Player B was equipped with a sensor network, so that Player A must consider how its strategy is going to signal B about its strategy.
%
A similar setup was described in~\cite{hespanha2000deception} wherein, prior to the playout of a matrix game, a Defender could ``show'' the location of any of its defensive assets to the Attacker equipped with an imperfect detector.
Depending on the reliability of the Attacker's detector, the Defender could nullify the benefit of the Attacker's pregame observations via a deceptive signalling strategy.
In context of repeated matrix games,~\cite{israeli1999sowing} showed that Player A can sometimes increase its minimum payoff by occasionally playing suboptimally in an effort to convince Player B that Player A's objective is something different.

\paragraph*{Deception by Motion}
A number of existing works have also considered deception in the context of motion control.
A typical formulation optimizes the path of a moving agent to reach its goal while minimizing the quality of the inference an observer is trying to make about the location of the agent's goal~\cite{Ornik2018DeceptionIO,Dragan2015DeceptiveRM,Topcu2022resal,Topcu2022supcontrol}. The moving agent must consider the trade-off between positional and informational advantage in terms of reaching its true goal quickly versus making it less obvious. Specifically, \cite{Dragan2015DeceptiveRM} and \cite{Topcu2022resal} formalized the notion of \emph{ambiguity} (hiding information about true goal), and \emph{exaggeration} (moving towards a decoy goal to send a false signal) as two ways to measure deceptiveness.

A common assumption made in these works is that the observing agent uses a \emph{prescribed} estimator/inference policy, and the deceiving agent leverages the knowledge of its structure. The deceived agent is also often so naive that it is not aware of the possibility of being deceived. Furthermore, since no decision is made by the observing agent on its action or inference policy, such formulation normally boils down to a one-sided optimization. 
The works on goal recognition by a passive observer also fall into this category~\cite{ Ramrez2010ProbabilisticPR,Masters2017DeceptiveP,Masters2019CostBasedGR,Wayllace2016GoalRD,Kulkarni2018ResourceBS,Masters2018CostBasedGR, Pereira2017LandmarkBasedHF}. While ~\cite{comandur2023desensitization} did not specify an estimator/inference policy for each agent, it similarly prescribed a desensitized, multi-objective payoff for each agent, not allowing either to reason directly about how its opponent may be acting deceptively.

We are interested in studying the possibility of deception without making the assumptions discussed above. We do not prescribe an estimator for the observing player, but instead allow it to select its own policy, which makes the problem a two-player game. Since the observing agent must choose its action based on the observed motion of the deceiver, the success of deception can be directly measured through the influence on decisions of the observer, and ultimately by the outcome of the game.
While existing works use observer's belief in the objective function (making deception itself to be the goal \cite{Ornik2018DeceptionIO, Dragan2015DeceptiveRM, Topcu2022supcontrol,Masters2017DeceptiveP}), our formulation views deception as a tool to accomplish underlying mission objectives (i.e., improve the game outcome). 




\section{Preliminaries}
This section formulates the two-Attacker one-Turret differential game with uncertainty in the Attackers' capability. 
We then present an illustrative example where the Turret may face a dilemma.
Finally, we define the notion of information-limiting strategy, which we use to discuss deception in this work.

\subsection{Problem Formulation}\label{sec:problemFormulation}

The game is played between a Turret (generally notated with subscript $T$) and two Attackers (\att{1} and \att{2}).
The Attackers seek to reach the target region defined by a unit circle centered at the origin.

The Turret is placed at the origin and its pointing angle is denoted by $\dang\in[-\pi,\pi]$.
The Turret has first-order dynamics:
\begin{equation}
    \dot{\theta}_T = \omega_T,
\end{equation}
where $\omega_T\in[-1,1]$ is the control input.
The Attackers also have first-order dynamics but in $\mathbb{R}^2$: 
\begin{equation}
    \dot{\x}_{A_i}=\vel{A_i},
\end{equation}
where the maximum speed is given by the speed-ratio parameter $\nu\in(0,1]$, i.e., $\speed_{A_i} \triangleq \|\vel{A_i}\| \leq \nu \leq 1$.
\footnote{
    If $\nu > 1$, then an Attacker will have an angular rate advantage within $r \leq \nu$.
    Therefore, it must only reach this radius in order to guarantee a successful breach.
    If we then treat this radius as the terminal surface and rescale the problem, all of the same analysis will apply.
}
Note that the two Attackers have the same maximum speed~$\nu$.
We also use polar coordinates $[r_{A_i}, \theta_{A_i}]^\top$ to describe the position of \att{i}.
The heading $\phi_{A_i}$ defined for an Attacker with respect to its radial vector (see Fig.~\ref{fig:1v1TurretWinningRegion}).

An Attacker is removed from the game when either of the following happens: 
\begin{itemize}
    \item \emph{Capture}: $\theta_{A_i}(t_{F_i})-\dang(t_{F_i})=\nobreak 0$, or 
    \item \emph{Breach}: $r_{A_i}(t_{F_i})=1$,
\end{itemize}
where $t_{F_i}$ for $i\in\{1,2\}$ denotes the terminal time for the $i$-th Attacker.
The game terminates at $t_F \triangleq \max_{i\in\{1,2\}} t_{F_i}$ when both Attackers are removed.
We consider a zero-sum Game of Kind\footnote{Game with a finite number of possible outcomes; see~\cite{isaacs1965differential}.} in which the terminal payoff is the number of breaches:
\begin{eqnarray}
J(\x_0;\gamma_A,\gamma_T) = \left\{\begin{array}{cl}
0 & \text{if both Attackers are captured} \\
1 & \text{if only one Attacker breaches} \\
2 & \text{if both Attackers breach},
\end{array}
\right.
\end{eqnarray}
where $\x_0=[\x_{A_1}(0)^\top,\x_{A_2}(0)^\top,\theta_T(0)]^\top$ is the initial condition, $\gamma_T$ is the Turret strategy, and $\gamma_A$ is the Attackers' strategy.
The Turret is the minimizer while the Attacker pair is the maximizer.

We consider $\gamma_T$ and $\gamma_A$ to be state-feedback control policies,
assuming that the players have access to the instantaneous state vector, $\x(t)$.
The Attackers have access to the true speed ratio parameter, $\nutrue$, however the Turret only has partial information about $\nutrue$.
We make the following assumptions on the information that the Turret has about $\nutrue$.
\begin{assumption}\label{assm:two_options}
    There are only two possible speed ratios: $\nufast$ or $\nuslow$, where $\nuslow < \nufast$.
\end{assumption}
\begin{assumption}\label{assm:speed_history}
    The Turret has access to the history of the Attackers' speeds.
\end{assumption} 
Based on the two assumptions and the restriction that the two Attackers have the same capability, if the Turret observes any speed $\speed_{A_i}(t)>\nuslow$ at any point in time, then it knows that the true speed ratio is $\nutrue=\nufast$.

\subsection{Motivating Example: Open-loop Strategy}

Consider open-loop strategies where the Turret commits to pursuing the attackers in a particular order, i.e., the Turret commits to going clockwise (CW) or counterclockwise (CCW) until the corresponding attacker is removed from the game.
We first introduce how the Turret's decision and the speed ratio may affect the outcome $J$ even if the game starts from the same initial condition.

\begin{figure}[t]
    \centering
    \includegraphics[trim={3cm 0 0 0},clip,width=0.4\textwidth]{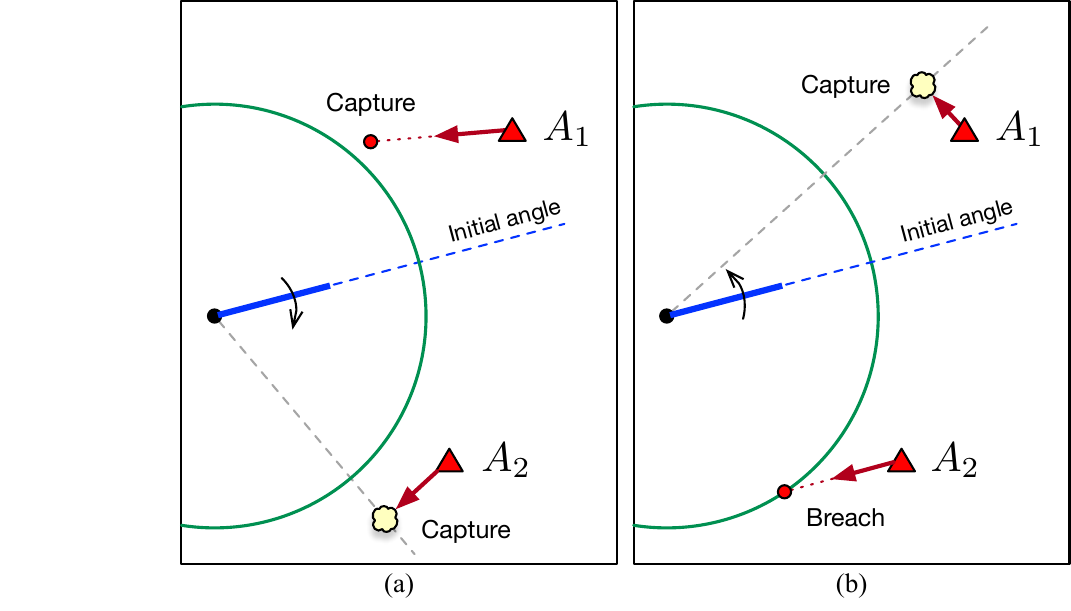}
    \caption{Slow Attackers: one breach ($J=1$) vs zero breach ($J=0$).
    (a)~If the Turret turns CW to capture \att{2} first, then it has sufficient time to capture \att{1} too.
    (b)~If the Turret turns CCW to capture \att{1} first, then \att{2} will be able to breach. 
    }
    \label{fig:motivating_example}
\end{figure}
{
For the state depicted in Fig.~\ref{fig:motivating_example} with a certain speed ratio, $\nuslow$, the Turret will achieve two captures if it goes CW first to capture \att{1} and then turns back to capture \att{2}. 
If it instead goes CCW first, \att{2} will successfully breach.
This example highlights how the decision made by the Turret determines whether the payoff is $J=0$ or $J=1$. 
The conditions that lead to such a scenario and the associated equilibrium strategies will be presented in Sec.~\ref{sec:complete_info}, but for now, the existence of such a configuration is enough to proceed with our discussion.
}

{
Now, from the same initial condition, suppose the speed ratio was some higher value, $\nufast$.
Figure~\ref{fig:motivating_example_fast} illustrates how the payoff is now either $J=1$ or $J=2$, depending on the decision made by the Turret.
What is important to note here is that the optimal direction (i.e., optimal sequence of capture) for the Turret to secure one capture is CCW (i.e., to capture \att{1} first), but this is the opposite of what the Turret should do in the earlier example with $\nu=\nuslow$.
}
\begin{figure}[t]
    \centering
    \includegraphics[trim={3cm 0 0 0},clip,width=0.4\textwidth]{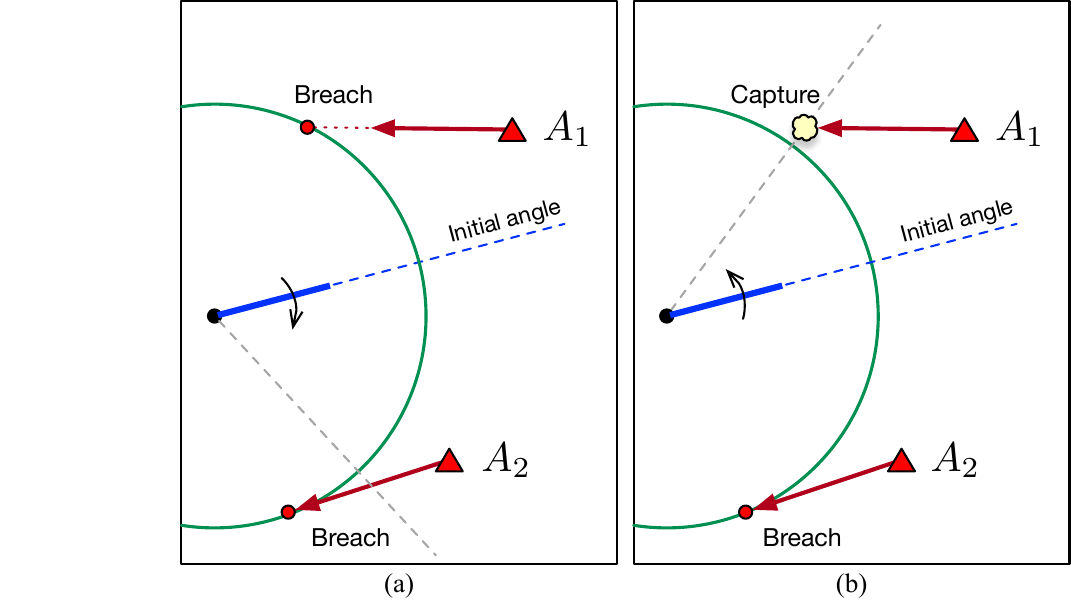}
    \caption{Fast Attackers: two breaches ($J=2$) vs one breach ($J=1$).
    (a)~Attacker \att{2} cannot be captured even if the Turret turns CW. If the Turret wastes too much time going CW (in hope of capturing \att{2}), then \att{1} will also breach. 
    (b)~If the Turret turns CCW, at least \att{1} can be captured.
    }
    \label{fig:motivating_example_fast}
\end{figure}




Finally, consider the case where the Turret is unaware of the true speed of the Attackers: it is either $\nuslow$ or $\nufast$.
If the game was played in an open-loop fashion (i.e., the decision is made only once at the beginning of the game), the Turret faces a dilemma between the following two options:
\begin{itemize}
    \item Play ``aggressively'' by assuming $\nuslow$ and attempt two captures to achieve $J=0$ (at the risk of allowing $J=2$ if the true speed was $\nufast$); or
    \item Play ``conservatively'' by assuming $\nufast$ and ensuring one capture to achieve $J\leq 1$ (at the risk of missing out on two captures if the true speed was $\nuslow$).
\end{itemize}
These cases are summarized in Table~\ref{tab:openLoopMatrix}.
\begin{table}[h]
\caption{A static analysis of the game when the Turret makes a decision in open-loop manner.}
\label{tab:openLoopMatrix}
\begin{center}
\begin{tabular}{|l|l|l|}
\hline
{} & $\nuslow$ & $\nufast$ \\ \hline
CW (aggressive)  & $J=0$      & $J=2$      \\ \hline
CCW (conservative) & $J=1$      & $J=1$      \\ \hline
\end{tabular}    
\end{center}
\end{table}

One may be tempted to analyze the game as a static matrix game\footnote{A player's preference over the above two options is application specific, and therefore, it is not the main interest of our work.} on Table~\ref{tab:openLoopMatrix}, but that is valid only when the game is played in an open-loop fashion: i.e., the Turret decides on the sequence of captures based only on the initial condition.
Our main focus is to investigate the space of possible \textit{closed-loop} strategies, i.e., the possible existence of a Turret strategy that could eliminate this dilemma situation and, conversely, a possible Attacker strategy that could force the Turret to face the dilemma, take a guess, and possibly make a ``wrong'' decision.


\subsection{Information Limiting Strategy}\label{sec:information_limiting}
As a form of deception, we consider how the Attackers may conceal their true capability to induce suboptimal action from the Turret, which leaves a room for the Attacker team to improve its performance compared to the full-information version of the game (i.e.~Sec.~\ref{sec:complete_info} and~\cite{vonmoll2022turret-runner-penetrator}).

In a more general setting, suppose there are $m$ different possibilities for the Attacker's capabilities. 
These capabilities can be described by the \emph{action set} or admissible controls: $\mathcal{A}_k$ for $k=1,...,m$.
We use $\mathbf{A}=\{\mathcal{A}_k\}_{k=1}^m$ to denote the {set of action sets}. 
Furthermore, the intersection of all the action sets is denoted by 
\begin{equation}
    \mathcal{A}^\star = \bigcap_{k\in\{1,...,m\}} \mathcal{A}_k.
\end{equation}

Let a policy $\gamma(\x;\mathcal{A}_k,\mathbf{A})$ be a mapping from the state $\x$ to the action set $\mathcal{A}_k$. The second argument explicitly denotes the action set of the agent actually employing this policy, and the third argument signifies that the policy takes into account other possibilities for the action set.
The following two conditions guarantee that the strategy $\gamma^\star$ is information limiting:

C1) $\gamma^\star(\x;\mathcal{A}_k,\mathbf{A}) \in \mathcal{A}^\star$ for all $\x$ and $k$; and 

C2) $\gamma^\star(\x;\mathcal{A}_k,\mathbf{A}) = \gamma^\star(\x;\mathcal{A}_l,\mathbf{A})$ for all $\x$ and $k,l$.\\
The first condition ensures that the action itself does not eliminate the possibility of all $m$ cases.
The second condition ensures that the observation of the action does not help the opponent distinguish between the $m$ cases.\footnote{Note that (C2) is a stronger condition, implying (C1).}
Although these conditions may not be necessary, they are clearly sufficient for a strategy to be information limiting.

One could make further generalizations. For example, one could consider heterogeneous agents, where each one has its own set of possible action sets. 
Possible intersections and opportunities for/constraints on information limiting strategies would be more complicated than those in this paper.
Additionally, different agents could have different \textit{true} action sets.
In this paper, in accordance with Assumption~\ref{assm:two_options}, both Attackers have the same set of possible action sets (i.e., the same $\mathbf{A}$) and the same \textit{true} action set (i.e., either both Attackers are fast or both attackers are slow).

Moreover, for the problem in this paper, the control action of the Attacker is its instantaneous velocity vector denoted by $\vel{A_k}(t)=\gamma_k(\x(t);\mathcal{A}_k,\mathbf{A})$.
We define a class of strategies to be \emph{max speed information limiting} as follows.
\begin{definition}
Let the action sets $\mathcal{A}_\text{\rm slow}$ and $\mathcal{A}_\text{\rm fast}$ denote the disks with radius \emph{$\nuslow$} and \emph{$\nufast$} centered at the origin of the velocity space, and let $\mathbf{A}=\{\mathcal{A}_\text{\rm slow}, \mathcal{A}_\text{\rm fast}\}$.
The Attacker strategy $\gamma_k(\cdot)$ is information limiting (i.e., it does not reveal its true speed) 
if the following two conditions are satisfied.

C1) It only requires the slow speed:
\begin{equation}\label{eq:speed_condition_c1}
    \|\vel{A_k}(t)\|\leq \nu_{\rm slow}.
\end{equation}

C2) It is independent of $\nu$:
\begin{equation}\label{eq:policyConditionC2}
    \vel{A_k}(t)=\gamma_k(\x(t);\mathcal{A}_\text{\rm slow},\mathbf{A})=\gamma_k(\x(t);\mathcal{A}_\text{\rm fast},\mathbf{A}).
\end{equation}
\end{definition}
Since $\mathcal{A}_\text{fast}$ contains $\mathcal{A}_\text{slow}$, we have $\mathcal{A}^\star = \mathcal{A}_\text{slow}$.
%
If each Attacker moves at or below the slow speed $\nuslow$, the Turret must play the game with two possibilities in mind: (i) $\nuslow$ is the actual maximum speed, or (ii) the Attackers can actually move faster, but are moving slowly to conceal their true capability.
Furthermore, if the action / velocity is independent of $\nu$, then the Turret cannot infer the Attacker's maximum speed by observing the history of the positions nor the velocity vectors.



\section{Complete-information Game}\label{sec:complete_info}
\noindent We build our work on the \emph{Turret Attacker differential game} played with complete information on the speed ratio. This section reviews the equilibrium strategies and the Value of the game for one Attacker~\cite{vonmoll2022circular} and two Attacker~\cite{vonmoll2022turret-runner-penetrator} scenarios.
Additionally, we construct winning regions in the typical manner by computing the barrier surfaces~\cite{isaacs1965differential}.
Finally, we note some properties about how these regions can change over time while the game is played.

\begin{definition}[Complete Information Barrier Surface]
For a given (complete information) Game of Kind, the barrier surfaces divide the state space into Turret-winning and Attacker-winning regions. 
If the state is in the Turret-winning (resp. Attacker-winning) region, then the Turret (resp. Attacker) has a strategy to win the game.
\end{definition}

A graphical depiction of the barrier surface for a given \emph{Turret} location is shown in Figure 6 of~\cite{vonmoll2022turret-runner-penetrator}. However, for this paper, it is more intuitively useful to consider a given set of \emph{Attacker} positions and visualize the winning regions in $S^1$ (a circle) where the Turret state lives.

\subsection{One Attacker}
\noindent
When the game is played between a Turret and one Attacker, the Attacker's radial position $r$ and the relative angle $\theta_{A/T}= \theta_A-\theta_T$ completely characterize the system state.
The \emph{barrier surface} that divides the state space into Turret-winning and Attacker-winning regions is given by the zero-level set of the following function~\cite{shishika2018local-game}:
\begin{equation}
  \vone(r,\theta_{A/T};\nu)= |\theta_{A/T}|+F(1;\nu)-F(r;\nu),
\end{equation}
where
\begin{equation}
    F(r;\nu) = \sqrt{\left(\frac{r}{\nu}\right)^2-1}-\cos^{-1}\left(\frac{\nu }{r}\right).
\end{equation}
$\vone(r,\theta_{A/T};\nu)$ is in fact the Value of the game when the payoff is selected to be the angular distance between the agents at the time the Attacker breaches the perimeter (the Attacker is the maximizer and the Turret is the minimizer)~\cite{vonmoll2022circular}.

The optimal strategies are given as follows:
\begin{equation}
    \omega_T^* = \sgn(\theta_{A/T})
\end{equation}
\begin{equation}\label{eq:optimal_1v1_Attacker}
    \speed_{A}^* = \nu,\quad\text{and}\quad \phi_A^* = \sgn(\theta_{A/T})\sin^{-1}\left(\frac{\nu}{r}\right).
\end{equation}
It was shown in~\cite{shishika2018local-game, vonmoll2022circular} that the Attacker's heading under this equilibrium strategy points to the tangent point on the circle with radius $\nu$.

\begin{figure}[t]
    \centering
    \includegraphics[width=0.3\textwidth]{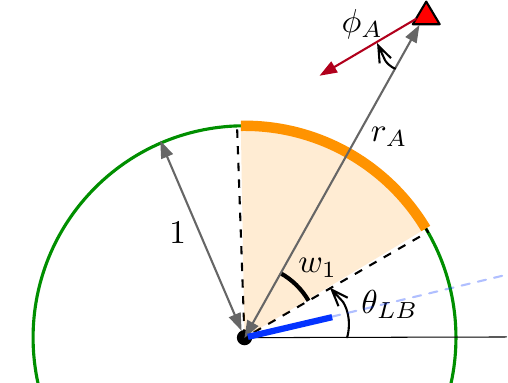}
    \caption{One vs. one Turret-winning region; in the Turret position shown, it cannot guarantee capture of the Attacker.}
    \label{fig:1v1TurretWinningRegion}
\end{figure}

Given an Attacker position $(r_{A},\theta_{A})$, the Turret-winning region (depicted in Figure~\ref{fig:1v1TurretWinningRegion}) is given by
\begin{equation}
    \region{A}(\nu) = \{\theta_T\;|\;|\theta_{A/T}|\leq 
    \widthone(r_{A};\nu)\},
\end{equation}
where $\widthone$ is a solution to $\vone(r,\theta_{A/T};\nu)=0$ w.r.t. $\theta_{A/T}$, i.e.,
\begin{equation}
    \widthone(r_{A};\nu) \triangleq F(r_{A};\nu) - F(1;\nu).
    \label{eq:widthone}
\end{equation}

We next consider how the Turret-winning region moves over time. Since we are ultimately interested in cases where the Attackers may take trajectories which are not optimal for the complete information game (because they would like to be deceptive instead), the following analysis describes how the Turret-winning region can change in response to Attacker movement at any given speed.
Suppose without loss of generality that the Turret is on the CW side of the Attacker: $\theta_{A/T}\in[0,\pi]$.
Then the relevant boundary of the Turret-winning region is given by
\begin{equation}
    \theta_{LB}(r_{A},\theta_{A};\nu) \triangleq \theta_{A} - \widthone(r_{A};\nu).
\end{equation}

%
%
\begin{lemma}\label{lem:1v1boundary}
For any given Attacker speed $v_A(t)$, the rate $\dot{\theta}_{LB}(r_A,\theta_A;\nu)$ is maximized if the Attacker's heading is tangent to the circle with radius $\nu$, as given by \eqref{eq:optimal_1v1_Attacker}.
Moreover, the maximum rate is
\begin{equation}
    \max_{\phi_A} \dot{\theta}_{LB} = \frac{v_A(t)}{\nu}.
\end{equation}
\end{lemma}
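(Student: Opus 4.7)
The plan is to write $\dot\theta_{LB}$ explicitly as a function of the Attacker's heading $\phi_A$ (at fixed instantaneous speed $v_A(t)$ and position $(r_A,\theta_A)$) and then perform a one-parameter trigonometric maximization. Differentiating $\theta_{LB}(r_A,\theta_A;\nu) = \theta_A - \widthone(r_A;\nu)$ along the Attacker's trajectory, the chain rule gives
\begin{equation*}
    \dot\theta_{LB} \;=\; \dot\theta_A \;-\; F'(r_A;\nu)\,\dot r_A,
\end{equation*}
since $\widthone(r_A;\nu) = F(r_A;\nu) - F(1;\nu)$ and only the $r_A$-dependent term contributes. I would then parameterize the Attacker's kinematics in polar form by its heading $\phi_A$ measured from the outward radial direction, so that $\dot r_A = v_A\cos\phi_A$ and $\dot\theta_A = (v_A/r_A)\sin\phi_A$.

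Next I would compute $F'(r;\nu)$ directly from the definition of $F$. A short calculation combines the derivative of $\sqrt{(r/\nu)^2-1}$ with that of $-\cos^{-1}(\nu/r)$ and, after clearing the common $\sqrt{r^2-\nu^2}$ factor, yields the clean expression $F'(r;\nu) = \sqrt{r^2-\nu^2}/(r\nu)$. Substituting back,
\begin{equation*}
    \dot\theta_{LB} \;=\; \frac{v_A}{r_A\nu}\Big(\nu\sin\phi_A \;-\; \sqrt{r_A^2-\nu^2}\,\cos\phi_A\Big),
\end{equation*}
which is a linear combination $a\sin\phi_A + b\cos\phi_A$ with $a=\nu$, $b=-\sqrt{r_A^2-\nu^2}$.

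Finally, I would invoke the standard identity $\max_{\phi}(a\sin\phi + b\cos\phi) = \sqrt{a^2+b^2}$. Here $\sqrt{a^2+b^2} = \sqrt{\nu^2 + (r_A^2-\nu^2)} = r_A$, so
\begin{equation*}
    \max_{\phi_A}\dot\theta_{LB} \;=\; \frac{v_A}{r_A\nu}\cdot r_A \;=\; \frac{v_A}{\nu},
\end{equation*}
as claimed. The maximizer satisfies $\sin\phi_A = a/\sqrt{a^2+b^2} = \nu/r_A$, which is exactly the condition that the Attacker's velocity vector lies along a tangent line to the circle of radius $\nu$ (since the perpendicular distance from the origin to that line equals $r_A|\sin\phi_A|=\nu$), matching the form of \eqref{eq:optimal_1v1_Attacker}.

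The only real bookkeeping hazard is the sign convention: the statement assumes the Turret is on the CW side of the Attacker, so $\theta_{LB}$ is the CW boundary, and the relevant tangent direction is the one corresponding to $\sin\phi_A = +\nu/r_A$ (the other tangent would be the minimizer). For the symmetric case $\theta_{A/T}\in[-\pi,0]$, the same argument applies with the sign of $\phi_A$ flipped, consistent with the $\sgn(\theta_{A/T})$ factor in \eqref{eq:optimal_1v1_Attacker}. Beyond this sign tracking, the proof is a direct computation and a single-variable optimization, with no delicate steps.
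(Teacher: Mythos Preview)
Your proof is correct and follows essentially the same approach as the paper's: write $\dot\theta_{LB}$ via the chain rule as a linear combination $a\sin\phi_A+b\cos\phi_A$ and read off the maximum $\sqrt{a^2+b^2}\cdot v_A/(r_A\nu)=v_A/\nu$ and the maximizer $\sin\phi_A=\nu/r_A$. The only discrepancy is a sign convention: the paper measures $\phi_A$ from the \emph{inward} radial direction (so $\dot r_A=-v_A\cos\phi_A$), whereas you use the outward convention, which flips the sign of $\cos\phi_A$ at the optimum; the physical direction (tangent to the $\nu$-circle, moving inward) and the value of $\sin\phi_A$ agree in both conventions, so the conclusion and its match with \eqref{eq:optimal_1v1_Attacker} are unaffected.
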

\begin{proof}
By using the following expressions
\begin{equation}
    \dot{\theta}_A=\frac{v_A\sin\phi_A}{r_A},\;\;\dot{r}_A=-v_A\cos\phi_A,\;\;\frac{\partial F}{\partial r_A}=\sqrt{\frac{1}{\nu^2}-\frac{1}{r_A^2}},
\end{equation}
the time derivative of $\theta_{LB}$ is given as
\begin{align}
    \dot{\theta}_{LB} &= \dot{\theta}_A - \dot{\widthone} 
        =\dot{\theta}_A -\dot{F}(r_A;\nu) \notag \\
        & = \frac{v_A\sin\phi_A}{r_A} + v_A\cos\phi_A \sqrt{\frac{1}{\nu^2}-\frac{1}{r_A^2}}.
\end{align}
The right-hand side is maximized when $[\cos\phi_A,\sin\phi_A]$ and $\left[\sqrt{\frac{1}{\nu^2}-\frac{1}{r_A^2}},\frac{1}{r_A}\right]$ are parallel: i.e.,
\begin{equation}
    \cos\phi_A = \sin\phi_A\sqrt{\left(\frac{r_A}{\nu}\right)^2-1}.
\end{equation}
Note that this condition is independent of the actual speed $v_A$.
Substituting this into $\sin^2\phi_A+\cos^2\phi_A=1$ yields
\begin{equation}
    \sin^2\phi_A = \left(\frac{\nu}{r_A}\right)^2.
\end{equation}
The sign of $\phi_A$ is obvious from the engagement geometry, and we obtain
\begin{equation}
    \sin\phi_A = \pm\frac{\nu}{r_A},
\end{equation}
which matches \eqref{eq:optimal_1v1_Attacker} (the sign is selected so that the Attacker has a component of velocity away from the Turret).

Now, by directly substituting the expressions for the optimal $\sin\phi_A$ and $\cos\phi_A$, we get
\begin{align}
    \dot{\theta}_{LB} &= \frac{v_A\nu}{r_A^2}+\frac{v_A\nu}{r_A}\sqrt{\left(\frac{r_A}{\nu}\right)^2-1}\sqrt{\frac{1}{\nu^2}-\frac{1}{r_A^2}} =\frac{v_A}{\nu}.
\end{align}
Importantly, this maximum rate is independent of the states.
Therefore, the heading in \eqref{eq:optimal_1v1_Attacker} provides a ``global'' optimal (not just instantaneous).
\end{proof}

\subsection{Two Attackers}
\noindent When there are two Attackers, the case where one or both Attackers can individually guarantee breach against the Turret is trivial.
The situation of interest is when $\vone<0$ for both Attackers.
The Turret can capture at least one Attacker, but we want to know whether both Attackers will be captured or one Attacker can breach.

As a form of cooperation, it was shown in~\cite{vonmoll2022turret-runner-penetrator} that one Attacker must behave as a \emph{Runner}, who sacrifices itself for the benefit of the other Attacker, who plays the role of a \emph{Penetrator}.
{We assume in this paper that the initial condition is such that, under optimal strategies, the Turret must go in one direction to capture the Runner and the opposite direction to capture the Penetrator.}
It was shown in \cite{vonmoll2022turret-runner-penetrator} that the optimal strategy for the Turret is to move towards the Runner at maximum speed.
Let us assume for now that the Turret captures \att{i} first and then \att{j}.
The Value function for the 2v1 game is given by
\begin{equation}
    \label{eq:V2v1}
  \vtwo(\x;\nu)= \vone(\pos{A_j},\dang;\nu) + 2\Runnercap(\pos{A_i},\dang;\nu)
\end{equation}
where the function $\Runnercap$ gives the angle that the Turret must travel before capturing the Runner, \att{i} (see Fig.~\ref{fig:complete_information_game}).
This angle is the solution to the following transcendental equation~\cite{vonmoll2022turret-runner-penetrator}:
\begin{equation}\label{eq:Runnercap}
 \rai \sin(\Runnercap - \theta_{A_i/T}) = \nu \Runnercap.
\end{equation}
Similar to the one-Attacker case, $\vtwo$ is the Value function of the game for which the terminal payoff is the angular separation between the Penetrator and the Turret at the time that the former breaches the target.

There is a singularity present in the solution of the two-Attacker game which is not reflected in~\eqref{eq:V2v1} (c.f.~\cite{vonmoll2022turret-runner-penetrator}).
When the angle $\theta_T\left(t_{F_i}\right) + \pi$ is contained in the reachable set of the Penetrator the singularity must be accounted for and~\eqref{eq:V2v1} does not apply.
However, for the remainder of this paper, \eqref{eq:V2v1} and~\eqref{eq:Runnercap} will be used since the regions for which deception is possible roughly correspond to the region of the two-Attacker game for which the singularity is not present.
The optimal Attacker strategies in this region are
\begin{equation}
    \label{eq:optimal_2v1_Attacker}
    \begin{aligned}
        \phi_{A_i}^* &= \sgn\left(\theta_{A_i/T}\right) \left(\frac{\pi}{2} - \tilde{\theta} + \theta_{A_i/T}\right),\\
        \phi_{A_j}^* &= \sgn\left(\theta_{A_j/T}\right) \sin^{-1}\left(\frac{\nu}{r_{A_j}}\right).
    \end{aligned}
\end{equation}
This Attacker strategy corresponds to the Runner's trajectory being perpendicular to the Turret's line-of-sight at the time of its capture and the Penetrator using its 1v1 strategy.
\begin{figure}
    \centering
    \includegraphics[width=0.3\textwidth]{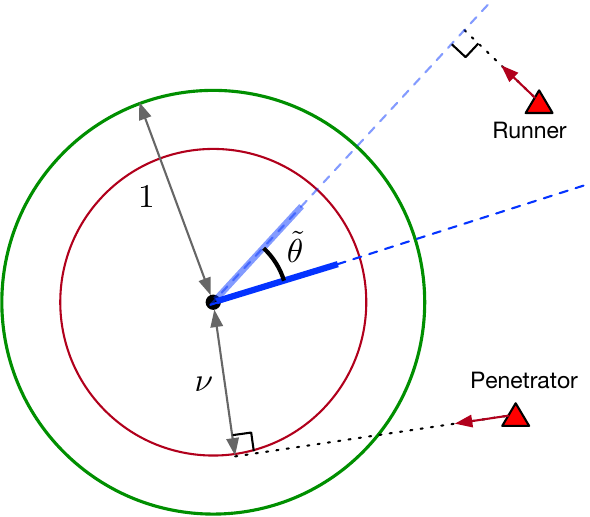} 
    \caption{The illustration of the equilibrium strategies for the Runner and Penetrator.}
    \label{fig:complete_information_game}
\end{figure}



We use $\region{A_i,A_j}(\nu)$ to denote the Turret-winning region corresponding to the capture order $A_i\rightarrow A_j$ when it is assumed that the two Attackers move at the speed $\nu$ in the remainder of the game. Note that the computation of this region can only be done numerically since it relies on the solution of~\eqref{eq:Runnercap}.

\begin{lemma}\label{lem:2v1boundary}
If the Attackers play the complete information 2v1 game optimally by deploying \eqref{eq:optimal_2v1_Attacker}, then the boundary of $\region{A_i,A_j}(\nu)$ moves at unit speed.
\end{lemma}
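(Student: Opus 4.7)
The plan is to exploit the Hamilton--Jacobi--Isaacs (HJI) structure of the complete-information Value $\vtwo$ established in \cite{vonmoll2022turret-runner-penetrator}. By construction, $\region{A_i,A_j}(\nu)$ is the set of Turret angles on which the Turret can enforce the capture order $A_i\rightarrow A_j$, so its boundary coincides with the zero level set $\{\theta_T:\vtwo(\pos{A_i},\pos{A_j},\theta_T;\nu)=0\}$. Showing that this boundary moves at unit speed is therefore equivalent to showing that a point on the zero level set is translated in $\theta_T$ at rate $\pm 1$ as the Attackers evolve under \eqref{eq:optimal_2v1_Attacker}.

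The argument has two main ingredients. First, let $\theta_T^*(t)$ denote a boundary point implicitly defined by $\vtwo(\pos{A_i}(t),\pos{A_j}(t),\theta_T^*(t);\nu)=0$, with $\pos{A_i}(\cdot)$ and $\pos{A_j}(\cdot)$ evolving according to \eqref{eq:optimal_2v1_Attacker}. Differentiating in $t$ yields
\[
\nabla_{\pos{A_i}}\vtwo\cdot\vel{A_i}^* + \nabla_{\pos{A_j}}\vtwo\cdot\vel{A_j}^* + \frac{\partial \vtwo}{\partial \theta_T}\,\dot{\theta}_T^* = 0.
\]
Second, because $\vtwo$ is the Value of the complete-information game, the HJI identity, evaluated pointwise at the saddle-point controls with the Turret's equilibrium input $\omega_T^* = \sgn(\theta_{A_i/T})\in\{-1,+1\}$ (slew toward the Runner at maximum rate, per \cite{vonmoll2022turret-runner-penetrator}), gives
\[
\nabla_{\pos{A_i}}\vtwo\cdot\vel{A_i}^* + \nabla_{\pos{A_j}}\vtwo\cdot\vel{A_j}^* + \frac{\partial \vtwo}{\partial \theta_T}\,\omega_T^* = 0.
\]
Subtracting these two identities produces
\[
\frac{\partial \vtwo}{\partial \theta_T}\bigl(\dot{\theta}_T^* - \omega_T^*\bigr) = 0,
\]
so that, provided $\partial \vtwo/\partial \theta_T\neq 0$ on the boundary, $\dot{\theta}_T^* = \omega_T^* = \pm 1$, which is precisely the unit-speed claim. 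As a consistency check, the same reasoning applied to $\vone$ at $v_A=\nu$ recovers Lemma~\ref{lem:1v1boundary}.

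The step I expect to be the main obstacle is verifying the transversality $\partial \vtwo/\partial \theta_T\neq 0$ on the boundary. On the non-singular portion of state space that the excerpt restricts attention to (i.e., away from the singularity noted after \eqref{eq:Runnercap}), this should hold because the boundary separates the Turret-winning region $\vtwo<0$ from the Attacker-winning region $\vtwo>0$, so $\vtwo$ crosses zero transversally rather than merely touching it. One can confirm this directly from \eqref{eq:V2v1} by differentiating in $\theta_T$: the $|\theta_{A_j/T}|$ dependence in $\vone(\pos{A_j},\theta_T;\nu)$ contributes $-\sgn(\theta_{A_j/T})$, while implicit differentiation of \eqref{eq:Runnercap} gives a derivative $\partial \Runnercap/\partial \theta_T$ that can be computed explicitly and is seen not to cancel the $\vone$ contribution along the 2v1 equilibrium branch in which the Runner and Penetrator lie on opposite sides of the Turret. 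Once this nondegeneracy is in hand, the HJI-based subtraction closes the proof.
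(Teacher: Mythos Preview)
Your argument is correct and is in substance the same as the paper's: the paper places a hypothetical Turret on the barrier and invokes the saddle-point result from \cite{vonmoll2022turret-runner-penetrator} that the game state then remains on the barrier while the Turret moves at unit angular speed, which is exactly your HJI identity $\dot\vtwo=0$ under the optimal controls combined with implicit differentiation of $\vtwo=0$. The transversality $\partial\vtwo/\partial\theta_T\neq 0$ you flag is precisely the inequality the paper later establishes as \eqref{eq:turretBenefitsCloserToRunner} (using $\partial\Runnercap/\partial\theta_T<-1$ from the Appendix), so no new work is needed there.
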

\begin{proof}
Consider a hypothetical Turret located exactly on the boundary of $\region{A_i,A_j}(\nu)$. By construction, this point is the boundary of the 2v1 game of kind. Moreover, assuming both sides play the optimal 2v1 game from~\cite{vonmoll2022turret-runner-penetrator}, the game state remains on the boundary of the game of kind until termination, and thus the boundary of $\region{A_i,A_j}(\nu)$ is always coincident with the hypothetical Turret's location. Finally, from~\cite{vonmoll2022turret-runner-penetrator}, the hypothetical Turret moves at unit (angular) speed.
\end{proof}

\section{No Dilemma Cases}
\label{sec:no_dilemma_cases}

\noindent 
We return now to the problem defined in Section~\ref{sec:problemFormulation}.
This section identifies the scenario of our interest by cutting out obvious cases where the Turret has a clear choice for its strategy.
As described in the motivating example, for $\nuslow$, the important winning regions are the 2v1 regions (the barrier surface describing the 2v1 winning regions determines whether $J=0$ or $J=1$), while for $\nufast$, the important winning regions are the 1v1 regions (the barrier surface describing the 1v1 winning regions determines whether $J=1$ or $J=2$). Thus, there are four important winning regions to consider: $\region{A_1}(\nufast)$, $\region{A_2}(\nufast)$, $\region{A_1,A_2}(\nuslow)$, and $\region{A_2,A_1}(\nuslow)$.
The following definitions are useful.

\begin{definition}[1v1 Intersection and Union]
We define the 1v1 intersection and union region as follows:
\begin{equation}
    \intersectone \triangleq \region{A_1} \cap \region{A_2},\;\;\;
    \unionone \triangleq \region{A_1} \cup \region{A_2}.
\end{equation}
\end{definition}

\begin{definition}[2v1 Intersection and Union]
We define the 2v1 intersection and union region as follows:
\begin{equation}
    \intersecttwo \triangleq  \region{A_1,A_2} \cap \region{A_2,A_1},\;\;
    \uniontwo \triangleq  \region{A_1,A_2} \cup \region{A_2,A_1}.
\end{equation}
\end{definition}

\subsection{Trivial Cases}
\begin{figure}
    \centering
    \includegraphics[width = 0.49\textwidth]{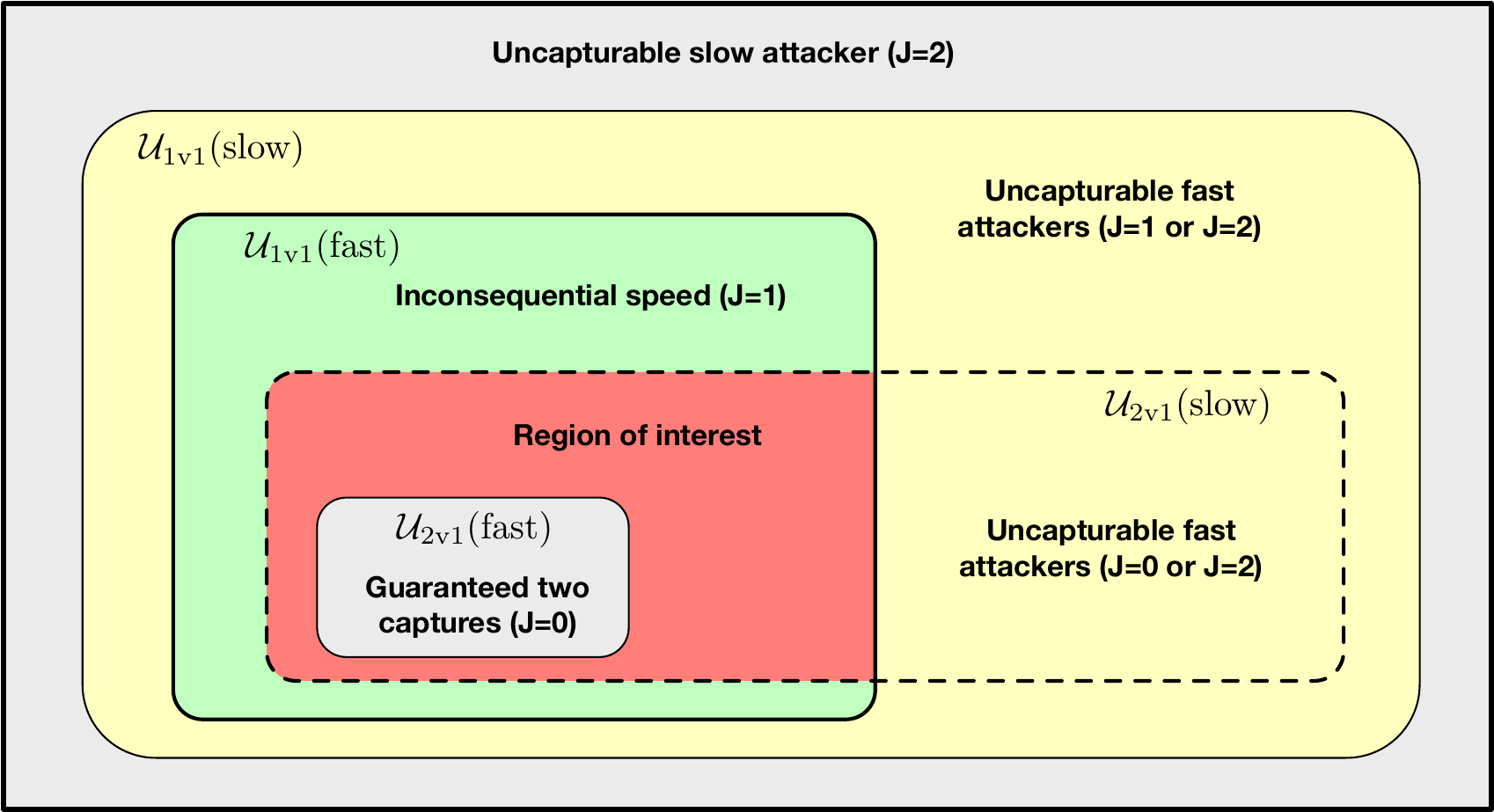}
    \caption{Venn diagram of trivial cases.}
    \label{fig:venn_diagram}
\end{figure}
\paragraph*{Guaranteed Two Captures}
If the Turret can capture two Attackers even with $\nufast$, i.e.,
\begin{equation}
    \theta_T \in \uniontwo(\nufast),
\end{equation}
then the Turret will employ the equilibrium 2v1 strategy corresponding to $\nufast$, and the outcome will be $J=0$.

\paragraph*{Uncapturable Slow Attackers}
If the states are such that
\begin{equation}
    \theta_T \notin \unionone(\nuslow),
\end{equation}
the Turret cannot capture any Attacker.
In this case the Turret's strategy is inconsequential in terms of the outcome: i.e., as long as the Attackers are rational, the outcome $J=2$ is guaranteed.

\paragraph*{Uncapturable Fast Attackers}
If the states are such that
\begin{equation}
    \theta_T \notin \unionone(\nufast),
\end{equation}
then the Turret cannot capture any Attacker if the Attackers are fast and rational.
Therefore, the Turret risks nothing by disregarding the possibility of $\nufast$ and will play the game assuming $\nuslow$: i.e., play the 2v1 slow if $\theta_T \in \uniontwo(\nuslow)$, otherwise, play the 1v1 slow game.
If the Attackers are actually slow, these selections are optimal. If the Attackers are actually fast, the Turret's strategy is inconsequential.

\paragraph*{Inconsequential Speed}
If the speeds $\nuslow$ and $\nufast$ both lead to one capture, i.e.,
\begin{equation}
    \theta_T \in \unionone(\nufast)\setminus  \uniontwo(\nuslow)
\end{equation}
then the Turret can only capture one Attacker regardless of the speed ratio.
In this case, the Turret will play the 1v1 fast game to ensure $J \leq 1$. If the Attackers also play their corresponding equilibrium strategy, the outcome will be $J=1$.

\subsection{Matching Directions}
\label{sec:matching_directions}

\noindent
Based on the observations above, we can restrict our attention to the states where 
\begin{equation}
    \theta_T \in \unionone(\nufast)\cap \uniontwo(\nuslow) \setminus \uniontwo(\nufast),
    \label{eq:nec_for_dilemma}
\end{equation}
i.e., one capture is possible if $\nu=\nufast$, and two captures are possible if $\nu=\nuslow$.  
All other cases are covered in the previous subsection.
Figure~\ref{fig:venn_diagram} illustrates the relationship between the region of interest (red) and other trivial cases.

A key feature of the motivating example was that the optimal strategies corresponding to the $\nuslow$ and $\nufast$ games were at odds (CW for $\nuslow$ and CCW for $\nufast$).
The following lemma describes a case where this tension does not exist.

\begin{lemma}
\label{lem:matching_directions}
Consider the case with \eqref{eq:nec_for_dilemma}.
If the states also satisfy
\begin{equation}
    \theta_T \in \region{A_i,A_j}(\nuslow) \cap \region{A_i}(\nufast),
    \label{eq:matching_directions}
\end{equation}
then the Turret has a deterministic strategy which ensures two captures if $\nu=\nuslow$ and one capture if $\nu=\nufast$.
\end{lemma}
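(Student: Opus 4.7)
The plan is to exhibit a single deterministic Turret strategy whose behavior simultaneously coincides with the 2v1 slow equilibrium and the 1v1 fast equilibrium against $A_i$. The key observation driving the construction is that both equilibria command the Turret to rotate at maximum angular speed toward $A_i$: the 2v1 result from~\cite{vonmoll2022turret-runner-penetrator} prescribes ``move toward the Runner at maximum speed,'' and $A_i$ is precisely the Runner under the capture order defining $\region{A_i,A_j}$, while the 1v1 equilibrium gives $\omega_T^* = \sgn(\theta_{A/T})$. Hence in the first phase of the game the Turret need not hedge between the two scenarios.

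Concretely, I would define the candidate strategy as
\begin{equation*}
\omega_T(t) = \begin{cases} \sgn(\theta_{A_i/T}(t)) & \text{if } A_i \text{ has not been removed,} \\ \sgn(\theta_{A_j/T}(t)) & \text{otherwise,} \end{cases}
\end{equation*}
which is independent of the Turret's belief about $\nu$ and therefore genuinely deterministic. The rest of the argument is by cases.

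For $\nu = \nufast$, the hypothesis $\theta_T \in \region{A_i}(\nufast)$ places the configuration inside the 1v1 Turret-winning region for $A_i$ at the fast speed. Because our Phase~1 control depends only on $(r_{A_i},\theta_{A_i},\theta_T)$ and agrees with the 1v1 equilibrium control, the definition of the winning region yields capture of $A_i$ against any admissible trajectory of $A_i$ with speed $\leq \nufast$; the motion of $A_j$ is irrelevant because the controller ignores it. This secures $J \leq 1$. For $\nu = \nuslow$, the hypothesis $\theta_T \in \region{A_i,A_j}(\nuslow)$ places the configuration in the 2v1 Turret-winning region with capture order $A_i \to A_j$. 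The optimal 2v1 Turret policy from~\cite{vonmoll2022turret-runner-penetrator} matches our candidate strategy in both phases (Runner pursuit, then Penetrator pursuit), so both captures are achieved and $J = 0$.

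The main obstacle is conceptual rather than computational: one must justify that the Turret, despite not knowing $\nu$, can safely commit to a single control law by leveraging the fortuitous alignment between the two equilibrium prescriptions in Phase~1, together with the robustness of the 1v1 winning region guarantee against the simultaneous, unmodeled motion of $A_j$. If either the first-phase actions of the 2v1 and 1v1 equilibria disagreed in direction, or the 1v1 argument were sensitive to $A_j$'s behavior, no such deterministic strategy would exist---which is precisely the dilemma case addressed in the subsequent sections.
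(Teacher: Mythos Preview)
Your proposal is correct and takes essentially the same approach as the paper: both argue that the 2v1-slow and 1v1-fast equilibrium Turret controls coincide (move toward $A_i$ at maximum speed), so the Turret need not resolve the speed ambiguity to play optimally. Your version is more explicit---spelling out the two-phase control law and the case analysis---whereas the paper's proof is a two-sentence observation of the same fact.
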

\begin{proof}
In this configuration the slow game and the fast game share the same equilibrium Turret strategy: move towards $A_i$ at max speed.
The Turret can continue to play optimally without making a decision of whether it is playing the slow game or the fast game.
\end{proof}

\subsection{Attacker Strategy}
\noindent
For each of these no-dilemma cases, the Turret can guarantee the associated outcomes based on the solution of the associated differential game.
These no-dilemma differential game solutions satisfy a saddle-point equilibrium property which means the Turret's guarantees apply to the ``worst-case" Attacker inputs.
Therefore there is no incentive for the Attackers to attempt to conceal their true capability.

\section{Guaranteed Dilemma Case}
\label{sec:guaranteed_dilemma}
\noindent 
Section~\ref{sec:matching_directions} considered the case when \eqref{eq:nec_for_dilemma} and \eqref{eq:matching_directions} hold.
Here we visit the case when the Turret's strategies corresponding to 2v1-slow game and 1v1-fast game do not match, i.e., when \eqref{eq:matching_directions} does \emph{not} hold.

\begin{definition}[The \AttackerStrategy]
We define the Slow-Speed Fast-Heading (\AttackerStrategy) as follows:
\begin{itemize}
    \item Both Attackers move at $\nuslow$.
    \item The direction of motion corresponds to the 1v1 fast game, i.e., move straight towards the tangent point of $\nufast$ circle.
\end{itemize}
\end{definition}
Note that this strategy can be implemented regardless of the Attacker's true maximum speed, and is therefore information-limiting according to \eqref{eq:speed_condition_c1}. Through the remainder of the paper, every time the SS-FH strategy is used, it is done regardless of the true maximum speed and is thus also information-limiting according to \eqref{eq:policyConditionC2}.

\begin{figure}[t]
    \centering
    \includegraphics[width=0.35\textwidth]{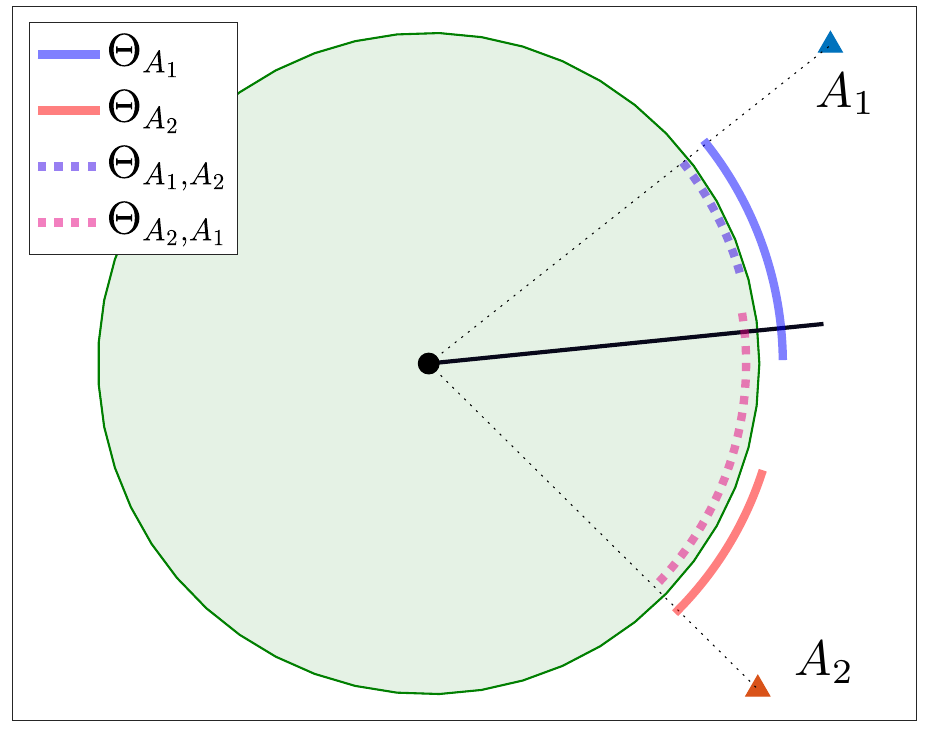}
    \caption{Configuration leading to dilemma. $\nuslow=0.2$, $\nufast=0.7$.}
    \label{fig:guaranteed_dilemma}
\end{figure}

\begin{theorem}
\label{thm:guaranteed_dilemma}
Consider an initial state that satisfies~\eqref{eq:nec_for_dilemma}.
If the intersection between the two regions, $\unionone(\nufast)$ and $\uniontwo(\nuslow)$, correspond to conflicting directions: i.e., 
\begin{equation}
    \theta_T \in \region{A_i,A_j}(\nuslow)\cap\region{A_j}(\nufast),
    \label{eq:mismatch_overlap}
\end{equation}
and also if
\begin{equation}
    \intersecttwo(\nuslow)=\intersectone(\nufast)=\emptyset,
    \label{eq:no_overlap}
\end{equation}
then the Attackers have an information-limiting strategy to generate a dilemma.
\end{theorem}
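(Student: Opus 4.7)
The plan is to take the SS-FH strategy from the preceding definition as the witness. By construction it satisfies both C1~\eqref{eq:speed_condition_c1} and C2~\eqref{eq:policyConditionC2}, so under Assumption~\ref{assm:speed_history} the observable position/speed histories it generates are identical in the worlds $\nutrue=\nuslow$ and $\nutrue=\nufast$. I would have the Attackers play SS-FH at least up to the instant the Turret commits to a direction, after which each Attacker may respond with its speed-specific complete-information strategy. It then suffices to populate the four cells of Table~\ref{tab:openLoopMatrix} and observe that no single Turret direction dominates.

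Orient the state so $A_i$ lies CW and $A_j$ lies CCW of the Turret, so that~\eqref{eq:mismatch_overlap} places the 2v1-slow winning direction CW and the 1v1-fast winning direction CCW.
(\emph{CW, $\nuslow$}): SS-FH is an Attacker deviation from the 2v1-slow equilibrium~\eqref{eq:optimal_2v1_Attacker}, but the Turret's equilibrium response wins against any Attacker play on the Turret side of the barrier by construction of $\region{A_i,A_j}(\nuslow)$; Lemma~\ref{lem:2v1boundary} pins the boundary to unit angular speed only under optimal Attacker play, so the initial containment $\theta_T\in\region{A_i,A_j}(\nuslow)$ is preserved and $J=0$.
(\emph{CW, $\nufast$}): $\intersectone(\nufast)=\emptyset$ combined with $\theta_T\in\region{A_j}(\nufast)$ gives $\theta_T\notin\region{A_i}(\nufast)$, so after commitment a newly-fast $A_i$ escapes the Turret while $A_j$ breaches on the far side, giving $J=2$.
(\emph{CCW, $\nuslow$}): monotonicity of $\widthone$ in $\nu$ gives $\region{A_j}(\nufast)\subseteq\region{A_j}(\nuslow)$, so $A_j$ is captured, but $\intersecttwo(\nuslow)=\emptyset$ with $\theta_T\in\region{A_i,A_j}(\nuslow)$ yields $\theta_T\notin\region{A_j,A_i}(\nuslow)$, ruling out the reversed-order 2v1, so $A_i$ breaches and $J=1$.
(\emph{CCW, $\nufast$}): the 1v1-fast strategy captures $A_j$ while $A_i$ breaches, giving $J=1$.

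The completed matrix reproduces Table~\ref{tab:openLoopMatrix}: CW yields $J=0$ if slow and $J=2$ if fast, while CCW yields $J=1$ in both columns; neither direction is dominant, and SS-FH denies the Turret the information needed to resolve the choice. The main technical hurdle I anticipate is the (\emph{CW, $\nuslow$}) entry, where SS-FH is \emph{not} the complete-information 2v1-slow equilibrium, so one must explicitly argue that the Turret's equilibrium win persists against off-equilibrium Attackers; Lemma~\ref{lem:2v1boundary} together with the saddle-point property of the complete-information game reviewed in Sec.~\ref{sec:complete_info} furnish this, and the monotonicity of 1v1 widths in $\nu$ used in the (\emph{CCW, $\nuslow$}) entry should be verified in passing from the closed form of $\widthone$ in~\eqref{eq:widthone}.
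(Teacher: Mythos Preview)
Your argument is essentially the open-loop matrix analysis that the paper explicitly cautions against in the motivating example (Sec.~II.B, Table~\ref{tab:openLoopMatrix}). You presuppose an ``instant the Turret commits to a direction,'' but in the closed-loop game no such instant need exist: the Turret can stall, oscillate, or simply wait and react once the Attackers reveal $\nufast$. Your $2\times 2$ table therefore does not cover the Turret's actual strategy space. Concretely, your (\emph{CW, $\nufast$}) cell breaks: if the Attackers speed up while $\theta_T$ is still in $\region{A_j}(\nufast)$ (which it is, by~\eqref{eq:mismatch_overlap}, at the initial time and for some time thereafter), the Turret can immediately reverse to CCW and secure $J=1$, collapsing that entry and dissolving the dilemma.

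The paper's proof supplies the two missing closed-loop ingredients. First, using~\eqref{eq:no_overlap} it shows that neither Attacker's angular position lies inside the mismatched overlap $\region{A_i,A_j}(\nuslow)\cap\region{A_j}(\nufast)$, so the Turret cannot capture anyone while remaining there. Second, under SS-FH the boundary of $\region{A_j}(\nufast)$ moves at rate $\nuslow/\nufast<1$ (Lemma~\ref{lem:1v1boundary}) while the boundary of $\region{A_i,A_j}(\nuslow)$ moves the other way, so the overlap shrinks monotonically and vanishes in finite time. These two facts together \emph{force} the Turret, whatever closed-loop policy it runs, to exit one of the two regions before any capture occurs; only at that exit does a case split like yours become legitimate. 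Your four-cell analysis is roughly the right endgame, but it must be preceded by this shrinkage argument; Lemma~\ref{lem:2v1boundary} alone does not deliver it, since that lemma concerns the 2v1 boundary under \emph{optimal} Attacker play, not under SS-FH.
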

\begin{proof}
Without loss of generality let $A_i=A_2$ be in the CW direction and $A_j=A_1$ be in the CCW direction. The posited configuration is depicted in Figure~\ref{fig:guaranteed_dilemma}.
Also suppose that the Attackers employ the \AttackerStrategy.

First, observe that the condition \eqref{eq:no_overlap} remains true if the Attackers employ the \AttackerStrategy.
Noting that $\region{A_1}(\nufast)$ extends from $\theta_{A_1}$, it is obvious that $\theta_{A_2}\notin \region{A_1}(\nufast)$ so long as $\intersectone(\nufast)=\emptyset$.
Hence, the Turret cannot capture $A_2$ while staying in $\region{A_1}(\nufast)$ if \eqref{eq:no_overlap} is true. 
Similarly, noting that $\region{A_2,A_1}(\nuslow)$ extends from $\theta_{A_2}$,
it can be seen that $\theta_{A_1} \notin \region{A_2,A_1}(\nuslow)$ so long as $\intersecttwo=\emptyset$.\footnote{If $\region{A_2,A_1}(\nuslow)$ contained $\theta_{A1}$, then a Turret could start CCW of \att{1} and just turn CW to capture both. 
This means that $\region{A_1,A2}(\nuslow)$ exists around $\theta_{A1}$ and intersects $\region{A_2,A_1}(\nuslow)$, which contradicts the condition that $\intersecttwo(\nuslow)=\emptyset$.}
Hence, the Turret cannot capture $A_1$ while staying in $\region{A_2,A_1}(\nuslow)$, if \eqref{eq:no_overlap} is true.

{The above two observations show that, if \eqref{eq:no_overlap} is true, the Turret cannot capture either of the Attackers while maintaining the condition \eqref{eq:mismatch_overlap}, i.e., staying in the ``mismatched'' overlap region.
Next, we will see that this overlap region shrinks and disappears in finite time.}


{If the Attackers use the \AttackerStrategy, the boundary of $\region{A_2,A_1}$ moves CW.}
Also, from Lemma~\ref{lem:1v1boundary}, the boundaries of $\region{A_1}$ and $\region{A_2}$ move at the velocity $+\nuslow/\nufast$ and $-\nuslow/\nufast$ respectively.
From these observations, we can see that the intersection in \eqref{eq:mismatch_overlap} monotonically decreases and disappears eventually.
This implies that, {to achieve any capture}, the Turret must leave either $\region{A_1}(\nufast)$ or $\region{A_2,A_1}(\nuslow)$, which is the dilemma that the Turret faces.

As one alternative, the Turret may choose to leave $\region{A_2,A_1}(\nuslow)$ and stay in $\region{A_1}(\nufast)$ to ensure one capture.
This choice is optimal if the speed is indeed $\nu=\nufast$.
However, if $\nu=\nuslow$, the Turret's choice is suboptimal since the Attackers can switch their strategy to optimal 2v1 slow game and avoid two captures, which might have been possible if the Turret stayed in $\region{A_2,A_1}(\nuslow)$.

As the other option, the Turret may choose to leave $\region{A_1}(\nufast)$ and stay in $\region{A_2,A_1}(\nuslow)$ to achieve two captures.
This choice is optimal if the speed is indeed $\nu=\nuslow$.
However, if $\nu=\nufast$, the Turret's choice is suboptimal since now that $\theta_T\notin\unionone(\nufast)$, the Attackers can speed up and play the 1v1 fast game to ensure that both Attackers score.

We have shown that if Attackers employ the \AttackerStrategy, the Turret must make a \emph{guess} between $\nuslow$ and $\nufast$.
In either case, there is a possibility of guessing wrong and missing one capture that could have been achieved if the true speed was known.
\end{proof}




\begin{remark}[No Risk for Attackers]
If the Attackers are actually slow, and~\eqref{eq:mismatch_overlap}, \eqref{eq:no_overlap} are satisfied, then the Attackers cannot guarantee any breaches; however if, by employing the \AttackerStrategy, the Turret decides to stay in $\region{A_1}(\nufast)$, then $A_2$ may be able to breach.
Similarly, if the Attackers are actually fast, they can only guarantee one breach; but if the Turret hedges against $\nuslow$, then \emph{both} Attackers may be able to breach by speeding up once $\theta_T \notin \region{A_1}(\nufast)$.
The Attackers risk nothing by employing the \AttackerStrategy.
\end{remark}

\section{Intermediate Configuration}\label{sec:intermediate}
\noindent
In Theorem~\ref{thm:guaranteed_dilemma}, the condition \eqref{eq:no_overlap} ensured that the Turret cannot enter a matching condition \eqref{eq:matching_directions} without risking a suboptimal outcome by leaving either $\region{A_i,A_j}(\nuslow)$ or $\region{A_j}(\nufast)$.
Whether such dilemma arises or not becomes more subtle when the overlap regions are nonempty, i.e., when \eqref{eq:no_overlap} does not hold.
The problem is that while \eqref{eq:matching_directions} may not hold at the initial time, there may exist a Turret strategy to accomplish \eqref{eq:matching_directions} at some later time while maintaining \eqref{eq:mismatch_overlap} constantly during this period. That is, it may be possible for the Turret to enter a ``matched" overlap region in a way that resolves any dilemma before the critical moment when it would have had to guess.\footnote{Animations available at YouTube \url{https://youtu.be/n0C0UZqJmMU}\label{fn:video}
}

There are three possible ways for \eqref{eq:no_overlap} to not hold:
\begin{enumerate}
    \item $\intersecttwo(\nuslow)=\emptyset,\intersectone(\nufast)\neq\emptyset$
    \item $\intersecttwo(\nuslow)\neq\emptyset,\intersectone(\nufast)=\emptyset$
    \item $\intersecttwo(\nuslow)\neq\emptyset,\intersectone(\nufast)\neq\emptyset$
\end{enumerate}
We will first focus on case 1 and show Turret sufficiency. That is, we will show a condition under which there exists a Turret strategy to guarantee its ability to enter $\intersectone(\nufast)$ and resolve the dilemma before making a guess. Then, we will turn to cases 2-3 and prove necessity. That is, we will show that even if both overlap regions exist, if the condition does not hold for Turret sufficiency, there exists an Attacker strategy to force the dilemma. Case 1 will focus on the usefulness of the SS-FH strategy, Case 2 will motivate the usefulness of the slow 2v1 game strategy corresponding to~\eqref{eq:optimal_2v1_Attacker}, and Case 3 will show how a combination of the two strategies can be used in general.

\subsection{Turret Avoiding Dilemma}
\label{sec:avoiding_dilemma}
\noindent
Begin with Case 1, when $\intersectone(\nufast)\neq\emptyset$.

\begin{definition}[1v1 Intersection Reachability Set]\label{def:overlap_reachabilityone}
Let $\cone$ be the center of $\intersectone(\nufast)$, and $2\wone$ be its width.
We define the \emph{1v1 overlap reachability set} as
\begin{equation}
    \reachone = \{\theta:\; \alpha|\theta-\cone|\leq\wone\},
\end{equation}
where $\alpha = \nuslow/\nufast$.
\end{definition}
\begin{lemma}
\label{lem:1v1overlap_reachability}

The Turret has a strategy to enter $\intersectone(\nufast)$ against all information-limiting Attacker strategies iff $\theta_T\in\reachone$.
\end{lemma}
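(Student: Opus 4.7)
The plan is to bound how fast the intersection region $\intersectone(\nufast)$ can shift or shrink under any information-limiting attacker strategy, and compare those rates to the Turret's unit angular speed. The key ingredient is Lemma~\ref{lem:1v1boundary}: when an attacker is constrained to $\|\vel{A_i}\|\le\nuslow$, the boundary of its $\nufast$-defined region $\region{A_i}(\nufast)$ can move at rate at most $\alpha = \nuslow/\nufast$. Because the CW and CCW boundaries of $\intersectone(\nufast)$ are contributed by two distinct attackers, their individual rates $\dot{a},\dot{b}$ lie independently in $[-\alpha,\alpha]$, which yields the combined constraint $|\dot{\cone}| + |\dot{\wone}| \le \alpha$ on the center and half-width.

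For sufficiency, I would have the Turret move at unit angular speed directly toward $\cone(t)$. Writing $g(t) = |\theta_T(t) - \cone(t)|$, this gives $\dot{g} = -1 + \sigma\,\dot{\cone}$ for a sign $\sigma\in\{-1,+1\}$, and hence $\tfrac{d}{dt}(g-\wone) \le -1 + |\dot{\cone}| - \dot{\wone}$. Maximizing the right-hand side over the admissible $(\dot{\cone},\dot{\wone})$ gives $\tfrac{d}{dt}(g-\wone) \le -(1-\alpha)$, so $g-\wone$ reaches zero by time $t^\star = (g(0)-\wone(0))/(1-\alpha)$. Since the worst-case shrinking of the intersection satisfies $\wone(t) \ge \wone(0) - \alpha t$, the intersection is still nonempty at $t^\star$ provided $t^\star \le \wone(0)/\alpha$, which rearranges to exactly $\alpha\,g(0)\le\wone(0)$. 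Thus the Turret enters $\intersectone(\nufast)$ before it can be annihilated.

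For necessity, I would exhibit an information-limiting attacker strategy that keeps every Turret trajectory out of the intersection when $\alpha\,g(0) > \wone(0)$. Each attacker uses a $\nuslow$-speed heading along the tangent to its $\nufast$ circle, oriented so that the boundary of $\region{A_i}(\nufast)$ facing the intersection retracts toward $A_i$ at the maximum rate $\alpha$ guaranteed by Lemma~\ref{lem:1v1boundary}. This satisfies (C1) and (C2), hence is information-limiting. Under it, $\cone(t)\equiv\cone(0)$ while $\wone(t) = \wone(0) - \alpha t$. Since $|\dot{\theta}_T|\le 1$ forces $|\theta_T(t) - \cone(0)| \ge \max\{0,g(0)-t\}$, requiring $|\theta_T(t)-\cone(0)|\le\wone(t)$ for some $t\in[0,\wone(0)/\alpha]$ again collapses to $\alpha\,g(0)\le\wone(0)$, contradicting the hypothesis.

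The main obstacle will be to justify that the attackers' symmetric maximal-shrinking strategy is indeed worst case among \emph{all} information-limiting strategies, not merely a narrow parametric family. This reduces to two observations: (i) the CW and CCW boundaries of $\intersectone(\nufast)$ are independently controlled by the two attackers because each is contributed by a distinct $\region{A_i}(\nufast)$; and (ii) the rate bound $\alpha$ in Lemma~\ref{lem:1v1boundary} is tight and simultaneously attainable by a tangent-heading trajectory. Secondary, routine issues include choosing the orientation of the Turret WLOG on one side of the intersection, and handling the trivial case in which $\theta_T(0)$ already lies inside $\intersectone(\nufast)$.
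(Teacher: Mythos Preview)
Your proposal is correct and follows essentially the same approach as the paper's proof: bound the boundary speeds of $\intersectone(\nufast)$ by $\alpha=\nuslow/\nufast$ via Lemma~\ref{lem:1v1boundary}, have the Turret pursue the center, and use the symmetric-shrinking (SS-FH--type) attacker strategy for necessity. Your treatment is actually more careful than the paper's: whereas the paper simply asserts that the Turret enters iff it reaches $\cone$ before the intersection vanishes in $\wone/\alpha$ time, you explicitly track the moving center by analyzing $g(t)-\wone(t)$ and showing it contracts at rate at least $1-\alpha$, which fills in a gap the paper glosses over.
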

\begin{proof}
For the strategy to be information-limiting, it is necessary for the Attackers to move at a speed less than or equal to $\nuslow$ {according to \eqref{eq:speed_condition_c1}}.
From Lemma~\ref{lem:1v1boundary}, the maximum speed at which the boundaries of $\intersectone(\nufast)$ can move is $\alpha$.
The Turret can enter $\intersectone(\nufast)$ iff it can reach $\cone$ before $\intersectone(\nufast)$ disappears in $\wone/\alpha$ units of time.
\end{proof}

\begin{lemma}\label{lem:vel_boundary_R1v1}
    If the Attackers are restricted to use information-limiting strategies,
    then the boundary of $\reachone$ moves at most at unit speed. 
    Moreover, the \AttackerStrategy{} (resp.~velocity antiparallel to SS-FH) is the only strategy that achieves this maximum speed in the direction that shrinks (resp.~expands)~$\reachone$.
\end{lemma}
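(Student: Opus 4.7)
The plan is to express the endpoints of $\reachone$ directly as a linear combination of the two inner boundaries of $\region{A_1}(\nufast)$ and $\region{A_2}(\nufast)$, and then invoke Lemma~\ref{lem:1v1boundary} to bound each component's rate.

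Without loss of generality assume $A_2$ lies CW of $A_1$, so that $\intersectone(\nufast)$ is bounded on the CCW side by the CCW-most boundary of $\region{A_2}(\nufast)$, call it $B^u$, and on the CW side by the CW-most boundary of $\region{A_1}(\nufast)$, call it $B^l$. Then $\cone = (B^u+B^l)/2$ and $\wone = (B^u-B^l)/2$, and the upper endpoint of $\reachone$ satisfies
\begin{equation*}
\theta_{UB} \;=\; \cone + \wone/\alpha \;=\; \tfrac{1+\alpha}{2\alpha}\,B^u \;-\; \tfrac{1-\alpha}{2\alpha}\,B^l,
\end{equation*}
and $\dot\theta_{UB}$ is the corresponding linear combination of $\dot{B}^u$ and $\dot{B}^l$.

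Information-limiting forces $v_{A_i}\le\nuslow$. Applying Lemma~\ref{lem:1v1boundary} to each Attacker (against a hypothetical Turret placed on the relevant inner boundary of its 1v1 region, with $\nu=\nufast$) gives $|\dot{B}^u|,|\dot{B}^l|\le \nuslow/\nufast=\alpha$, with each extremum attained uniquely at $v_{A_i}=\nuslow$ along the heading tangent to the respective $\nufast$-circle. Because both coefficients $(1\pm\alpha)/(2\alpha)$ are positive for $\alpha\in(0,1)$, the triangle inequality yields
\begin{equation*}
|\dot\theta_{UB}| \;\le\; \tfrac{1+\alpha}{2\alpha}\alpha + \tfrac{1-\alpha}{2\alpha}\alpha \;=\; 1,
\end{equation*}
with equality only when both Attackers attain their Lemma~\ref{lem:1v1boundary} extrema simultaneously. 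Matching the signs in the linear combination then shows that $\dot\theta_{UB}<0$ (shrinking side) forces each Attacker to use the tangent heading directed away from the Turret at speed $\nuslow$, which is exactly the \AttackerStrategy; conversely $\dot\theta_{UB}>0$ (expanding side) forces the antiparallel choice. A symmetric argument handles the lower endpoint $\theta_{LB}=\cone-\wone/\alpha$.

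The main obstacle is bookkeeping rather than analysis: carefully identifying which inner boundary of which 1v1 region corresponds to each endpoint of $\reachone$, and verifying that the signed heading prescribed in~\eqref{eq:optimal_1v1_Attacker} (which depends on $\sgn(\theta_{A_i/T})$) coincides with the rate-maximizing heading of Lemma~\ref{lem:1v1boundary} for the correct Attacker--boundary pairing, so that the two individual extremizers simultaneously coincide with the SS-FH prescription.
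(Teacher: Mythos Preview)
Your proposal is correct and follows essentially the same route as the paper: write the endpoint of $\reachone$ as an affine combination of the two inner 1v1 boundaries (the paper calls them $\theta_1,\theta_2$; you call them $B^l,B^u$), bound each boundary's rate by $\alpha$ via Lemma~\ref{lem:1v1boundary}, and read off both the unit-speed bound and the unique extremizer from the coefficients. The only cosmetic difference is that the paper substitutes the extremal values $\dot\theta_1=-\dot\theta_2=\alpha$ directly, whereas you phrase it as a triangle-inequality bound with equality conditions; the content is identical.
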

\begin{proof}
Let $\theta_1$ and $\theta_2$ denote the boundaries of $\region{A_1}(\nufast)$ and $\region{A_2}(\nufast)$ that comprise the boundaries of $\intersectone(\nufast)$ as shown in Fig.~\ref{fig:avoidDilemma}. 
If the Attackers use information limiting strategies, then their speeds are limited by $\nuslow$, and consequently, the result of Lemma~\ref{lem:1v1boundary} gives us the following bounds: $|\dot{\theta}_1|\leq \alpha$ and $|\dot{\theta}_2|\leq \alpha$.

The center and the width of $\intersectone$ is given as $\cone=0.5(\theta_1+\theta_2)$ and $\wone=0.5(\theta_2-\theta_1)$.
Now, the boundary of $\reachone$ that is relevant to our problem is
\begin{equation}
    \theta_{UB} = \cone + \frac{1}{\alpha}\wone,
\end{equation}
whose time derivative can be described in terms of $\dot{\theta}_1$ and $\dot{\theta}_2$ as follows:
\begin{align}\label{eq:thetaUB_dot}
    \dot{\theta}_{UB} &= \dcone + \frac{1}{\alpha}\dwone\\
    &= 0.5\left( 1-\frac{1}{\alpha} \right)\dot{\theta}_1+0.5\left(1+\frac{1}{\alpha}\right)\dot{\theta}_2.
\end{align}
Noting that $(1-1/\alpha)<0$, the maximum clockwise velocity of $\theta_{UB}$ is achieved when $\dot{\theta}_1 = -\dot{\theta}_2 = \alpha$, which gives $\dot{\theta}_{UB}=-1$ ($\reachone$ shrinks). Similarly, the maximum counter-clockwise velocity of $\theta_{UB}$ is achieved when $-\dot{\theta}_1 = \dot{\theta}_2 = \alpha$, which gives $\dot{\theta}_{UB}=+1$ ($\reachone$ expands).

Finally, one can verify from Lemma~\ref{lem:1v1boundary} that the \AttackerStrategy{} is the only one that achieves $\dot{\theta}_1 = -\dot{\theta}_2 = \alpha$, and the velocities anti-parallel to the \AttackerStrategy{} is the only one that achieves $-\dot{\theta}_1 = \dot{\theta}_2 = \alpha$.
\end{proof}

\begin{figure}[t]
    \centering
    \includegraphics[width=0.4\textwidth]{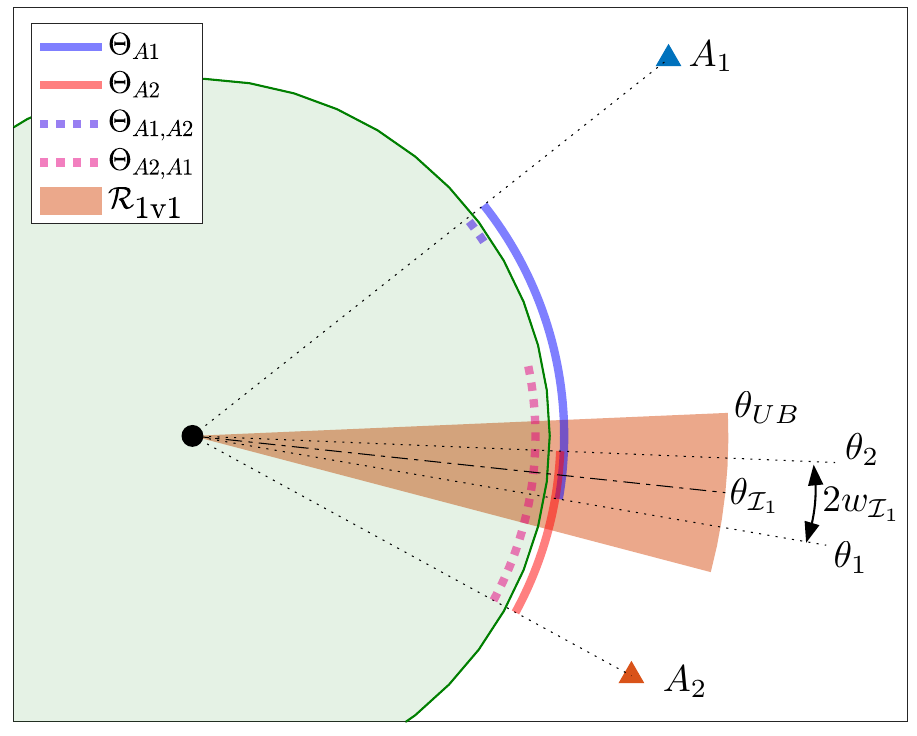}
    \caption{Illustration of the 1v1 overlap reachability set, $\reachone$, for parameters $\nuslow=0.3$ and $\nufast=0.7$.}
    \label{fig:avoidDilemma}
\end{figure}

\begin{theorem}\label{thm:avoid_dilemma}
Suppose the initial states satisfy \eqref{eq:mismatch_overlap}. 
If the Turret is also in the 1v1 intersection reachability set, i.e., 
\begin{equation}
    \theta_T \in \reachone,
    \label{eq:avoid_dilemma_sufficient}
\end{equation}
then it can avoid the dilemma situation and ensure optimal number of captures (i.e., two if $\nutrue = \nuslow$ and one if $\nutrue=\nufast$).
Consequently, the Attackers' knowledge of the true maximum-speed information has no value.
\end{theorem}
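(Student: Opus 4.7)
The plan is to exhibit a Turret strategy that guarantees the stated outcomes regardless of $\nutrue \in \{\nuslow,\nufast\}$, thereby showing the Attackers gain nothing from hiding their true speed. The strategy has three ingredients: \textbf{(i)} while all observed Attacker speeds are $\leq \nuslow$, move at max angular rate toward $\cone$, the center of $\intersectone(\nufast)$; \textbf{(ii)} if at any instant some $v_{A_k}(t)>\nuslow$ is observed, then by Assumption~\ref{assm:two_options} the true speed is identified as $\nutrue=\nufast$ and the Turret switches to the 1v1-fast equilibrium against the Attacker still inside its $\region{A_j}(\nufast)$; \textbf{(iii)} once $\theta_T$ enters $\intersectone(\nufast)$, apply Lemma~\ref{lem:matching_directions} to commit to the capture order $A_i\rightarrow A_j$.

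First I would verify that ingredient (i) reaches $\intersectone(\nufast)$ in finite time against any information-limiting Attacker play. This is exactly the content of Lemma~\ref{lem:1v1overlap_reachability}: because $\theta_T\in\reachone$ by hypothesis~\eqref{eq:avoid_dilemma_sufficient}, and Lemma~\ref{lem:vel_boundary_R1v1} guarantees that the boundary of $\reachone$ moves at most at unit angular speed under information-limiting Attacker strategies, a Turret moving at unit speed toward $\cone$ stays inside $\reachone$ and arrives at $\intersectone(\nufast)$ before it collapses. Next I would handle ingredient (ii): if an Attacker is ever observed moving faster than $\nuslow$, then $\nutrue=\nufast$ is revealed; because the Turret began in $\region{A_j}(\nufast)$ by~\eqref{eq:mismatch_overlap} and, while chasing $\cone$, has been moving in the $A_j$ direction (which lies on the same side of $\theta_T$ as $\cone$), a straightforward comparison of the unit Turret rate with the $\nuslow/\nufast<1$ rate from Lemma~\ref{lem:1v1boundary} shows that $\theta_T$ has not yet left $\region{A_j}(\nufast)$. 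The Turret can therefore instantly switch to the 1v1-fast strategy of~\eqref{eq:optimal_1v1_Attacker} against $A_j$ and secure one capture, which is the optimal outcome when $\nutrue=\nufast$.

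For ingredient (iii), the key is to show that at the moment the Turret enters $\intersectone(\nufast)$, the matching-directions condition~\eqref{eq:matching_directions} of Lemma~\ref{lem:matching_directions} is satisfied for the appropriate capture order. Since $\intersectone(\nufast)\subseteq\region{A_1}(\nufast)\cap\region{A_2}(\nufast)$, one direction of the matching condition is automatic. The remaining task is to show that $\theta_T\in\region{A_i,A_j}(\nuslow)$ for some $i,j$ at this moment. To this end I would argue that $\region{A_k}(\nufast)\subseteq\region{A_k}(\nuslow)$ (slower Attackers are easier to intercept in 1v1), so $\intersectone(\nufast)\subseteq\intersectone(\nuslow)$, and then show that $\intersectone(\nuslow)\subseteq\uniontwo(\nuslow)$ (if the Turret can individually beat each slow Attacker in 1v1, the 2v1-slow game is winnable). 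Combined with the fact that the Turret starts in $\region{A_i,A_j}(\nuslow)$ by~\eqref{eq:mismatch_overlap} and a continuity/rate argument using Lemma~\ref{lem:2v1boundary} to bound how fast the boundary of $\region{A_i,A_j}(\nuslow)$ can recede under information-limited Attacker inputs, $\theta_T$ stays inside $\region{A_i,A_j}(\nuslow)$ throughout Phase (i). Hence Lemma~\ref{lem:matching_directions} applies, yielding two captures if $\nutrue=\nuslow$ and one if $\nutrue=\nufast$, as claimed.

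The hard part will be step (iii): ensuring that the myopic ``head toward $\cone$ at max speed'' policy simultaneously preserves membership in the relevant 2v1-slow winning region throughout the transition. The obstacle is that $\region{A_i,A_j}(\nuslow)$ is defined by the complicated transcendental equation~\eqref{eq:Runnercap}, and its boundary can move in response to \emph{any} admissible Attacker velocity (not just the \AttackerStrategy{} used in Lemma~\ref{lem:vel_boundary_R1v1}), so the rate comparison requires carefully bounding $\dot{\theta}_{LB}$ for the 2v1-slow boundary under information-limited inputs and checking that the Turret's unit angular speed always dominates. Once this containment is established, the preceding lemmas do all the remaining work and the final sentence about the ``no value'' of the speed information follows because both possible true speeds lead to the same optimal outcome the Turret would have obtained with full information.
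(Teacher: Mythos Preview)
Your overall architecture and Turret strategy (head toward $\cone$ at max rate) coincide with the paper's, but there is a geometric slip and, more importantly, a genuine gap in the justification for the part you correctly flag as hard.

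First, the direction: chasing $\cone$ means moving toward $A_i$, not $A_j$. By~\eqref{eq:mismatch_overlap} the Turret is in $\region{A_j}(\nufast)$ but not in $\region{A_i}(\nufast)$ (otherwise~\eqref{eq:matching_directions} would already hold), and $\intersectone(\nufast)\subset\region{A_i}(\nufast)$; hence $\cone$ lies on the $A_i$ side of $\theta_T$. Your reasoning for ingredient (ii) therefore misstates the geometry, although the conclusion that $\theta_T$ remains in $\region{A_j}(\nufast)$ survives for a different reason: the Turret is heading toward $\intersectone\subset\region{A_j}(\nufast)$ and, by Lemma~\ref{lem:1v1overlap_reachability}, enters it before it collapses.

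The real gap is ingredient (iii). Your proposed rate argument invokes Lemma~\ref{lem:2v1boundary}, but that lemma only asserts unit speed of the $\region{A_i,A_j}(\nuslow)$ boundary when the \emph{Attackers play the 2v1-slow equilibrium}~\eqref{eq:optimal_2v1_Attacker}; it gives no bound under arbitrary information-limited inputs, which is precisely what you need. And the containment $\intersectone(\nuslow)\subseteq\uniontwo(\nuslow)$ is false in general: being able to beat each slow Attacker individually does not imply the Turret can beat them sequentially. The paper sidesteps both issues with a single observation you are missing: moving at max rate toward $A_i$ \emph{is} exactly the Turret's equilibrium strategy in the 2v1-slow game with capture order $A_i\rightarrow A_j$. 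By the saddle-point property of that game, a Turret playing its equilibrium from inside its winning region cannot be pushed out by any Attacker action whatsoever, so $\theta_T\in\region{A_i,A_j}(\nuslow)$ is preserved automatically throughout Phase~(i). Once you use this, the ``hard part'' evaporates; upon entering $\intersectone(\nufast)$ the matching condition~\eqref{eq:matching_directions} holds and Lemma~\ref{lem:matching_directions} finishes the proof.
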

\begin{proof}
First, consider the case where the Attackers are fast and they reveal their true speed while condition \eqref{eq:mismatch_overlap} is still satisfied.
Since $\region{A_j}(\nufast)$ still holds, the Turret is able to ensure one capture (optimal number) by moving towards $A_j$ after observing that the Attackers are fast.

Now, consider the case where the Attackers use $\nuslow$ regardless of their true capability as long as \eqref{eq:mismatch_overlap} holds.
By construction, the Turret can enter $\intersectone$ before it disappears (Lemma~\ref{lem:1v1overlap_reachability}).
Along the way, the Turret moves at max speed towards $A_i$, which is also the optimal behavior for 2v1 slow game with the capture order $A_i\rightarrow A_j$.
Therefore, we know that the condition $\theta_T\in\region{A_i, A_j}$ still holds.
This result implies that the Turret is now in the matching condition in \eqref{eq:matching_directions}.
Together with Lemma~\ref{lem:matching_directions}, the proof is complete.
\end{proof}


\subsection{Attackers Forcing Dilemma} \label{sec:forcing_dilemma}
\noindent
This section shows that the condition \eqref{eq:avoid_dilemma_sufficient} is not only sufficient but also necessary for the Turret to avoid dilemma.

Begin with Case 2, where $\intersecttwo(\nuslow)\neq\emptyset$.
As a counterpart of Lemma~\ref{lem:1v1overlap_reachability}, we provide the following definition for $\reachtwo$.

\begin{definition}[2v1 Intersection Reachability Set]\label{def:overlap_reachabilitytwo}
    We define the \emph{2v1 intersection reachability set} as
    \begin{equation}
    \reachtwo=\intersecttwo(\nuslow),
    \end{equation}
\end{definition}

This definition is motivated by the following remark.

\begin{remark}
Suppose $\theta_T \notin \intersecttwo(\nuslow)$. If the Attackers play the slow 2v1 game corresponding to~\eqref{eq:optimal_2v1_Attacker}, the boundary of $\intersecttwo(\nuslow)$ moves in the direction away from the Turret at unit speed in the inertial frame.
\end{remark}

It can be shown that an Attacker strategy of playing the slow 2v1 game corresponding to~\eqref{eq:optimal_2v1_Attacker} is sufficient for the attackers to force the dilemma in Case 2. We omit the formal proof, because it will become clear that Case 2 can be subsumed by the subsequent discussion on Case 3. However, it is critical to observe that while Definition~\ref{def:overlap_reachabilitytwo} is motivated by this case and the Attacker strategy of playing the slow 2v1 game corresponding to~\eqref{eq:optimal_2v1_Attacker}, both $\reachone,\reachtwo$ are defined independently of any particular state history or strategy instantiation.

Now, consider Case 3, where $\intersecttwo(\nuslow)\neq\emptyset,\intersectone(\nufast)\neq\emptyset$.
Given that Sec.~\ref{sec:avoiding_dilemma} still ensures that the Turret can avoid dilemma if it is in $\reachone$, now we assume that the Turret is \emph{not} in $\reachone$.
We can define distances $d_1(t)\geq 0$ and $d_2(t) \geq 0$ that describe the angular separation between the Turret's position and the boundaries of $\reachone$ and $\reachtwo$ respectively (see Fig.~\ref{fig:force_dilemma_overlap}).
\begin{figure}[t]
    \centering
    \includegraphics[width=0.4\textwidth]{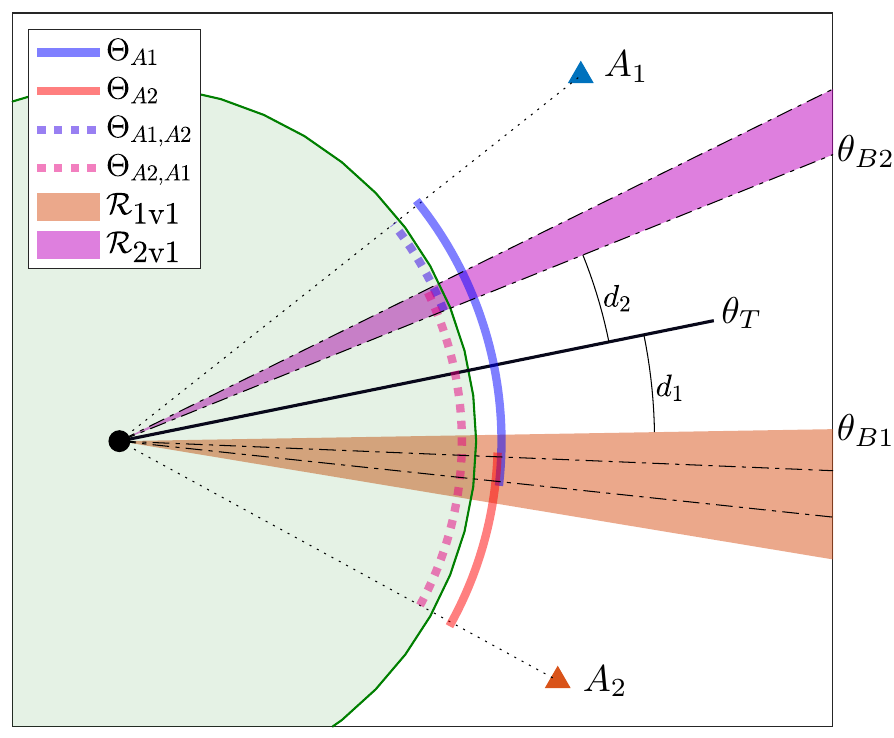}
    \caption{Illustration of Case 3 with parameters $\nuslow=0.25$, $\nufast=0.7$. 
    $\theta_{B1}$ and $\theta_{B2}$ are the angular positions of the boundaries of $\reachone$ and $\reachtwo$ respectively.}
    \label{fig:force_dilemma_overlap}
\end{figure}

Letting $\theta_{B1}$ and $\theta_{B2}$ denote the boundaries of $\reachone$ and $\reachtwo$ as in Fig.~\ref{fig:force_dilemma_overlap}, it is easy to see the following relationship:
\begin{eqnarray}
    \dot{d}_1 &=& \omega_T - \dot{\theta}_{B1} \\
    \dot{d}_2 &=& -\omega_T + \dot{\theta}_{B2}.
\end{eqnarray}
The signs are based on the configuration in Fig.~\ref{fig:force_dilemma_overlap}, which will change if $\reachone$ (resp.~$\reachtwo$) is on the CCW (resp.~CW) side of the Turret.

Before we present the main theorem, we show three Lemmas that describe how the distances $d_1(t)$ and $d_2(t)$ evolve under the two types of Attacker strategies we've considered and that arbitrary combinations of these two strategies result in the overlap regions disappearing in finite time.
\begin{lemma}\label{lem:d1d2behavior_a}
Suppose the Attackers play the 2v1 slow game.
Then we have $\dot{d}_2\geq0$ for any admissible Turret strategy. 
Moreover, we have
\begin{equation}
    \label{eq:2v1_slow_distances_growing}
    \dot{d}_1+\dot{d}_2\geq 0. \notag
\end{equation}
\end{lemma}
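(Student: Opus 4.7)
The plan is to substitute the known dynamics of the boundaries $\theta_{B1}$ and $\theta_{B2}$ under the 2v1 slow game into the expressions for $\dot{d}_1$ and $\dot{d}_2$ stated just before the lemma, then invoke the preceding remark together with Lemma~\ref{lem:vel_boundary_R1v1} to bound the result. Without loss of generality I fix the configuration of Fig.~\ref{fig:force_dilemma_overlap}, in which $\reachone$ lies CW of the Turret and $\reachtwo$ lies CCW; the mirrored cases follow by flipping signs exactly as noted after the definitions of $\dot{d}_1$ and $\dot{d}_2$.

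For the first claim, the remark immediately preceding the lemma applies because $\theta_T \notin \intersecttwo(\nuslow) = \reachtwo$ by hypothesis and the Attackers employ~\eqref{eq:optimal_2v1_Attacker}. It tells us that the CW boundary of $\reachtwo$ recedes CCW at unit inertial angular speed, i.e., $\dot{\theta}_{B2} = +1$ in this convention. Substituting into $\dot{d}_2 = -\omega_T + \dot{\theta}_{B2}$ and using $\omega_T \in [-1, 1]$ gives $\dot{d}_2 = 1 - \omega_T \geq 0$ for every admissible Turret control.

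For the second claim, adding the expressions for $\dot{d}_1$ and $\dot{d}_2$ cancels $\omega_T$ and leaves $\dot{d}_1 + \dot{d}_2 = \dot{\theta}_{B2} - \dot{\theta}_{B1}$. The 2v1 slow strategy~\eqref{eq:optimal_2v1_Attacker} prescribes Attacker speeds equal to $\nuslow$, so it satisfies~\eqref{eq:speed_condition_c1} and qualifies as information-limiting; Lemma~\ref{lem:vel_boundary_R1v1} then bounds $|\dot{\theta}_{B1}| \leq 1$. Combining this with $\dot{\theta}_{B2} = +1$ from the previous step yields $\dot{\theta}_{B2} - \dot{\theta}_{B1} \geq 1 - 1 = 0$.

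The main subtlety, rather than an obstacle, is sign bookkeeping: one must verify that the ``away from the Turret'' direction for $\theta_{B2}$ aligns with the $+\dot{\theta}_{B2}$ coefficient appearing in $\dot{d}_2$, and that Lemma~\ref{lem:vel_boundary_R1v1}'s unit-speed bound applies to whichever side of $\reachone$ is adjacent to the Turret (both sides obey the same bound by the symmetric construction of $\reachone$). Once the conventions of Fig.~\ref{fig:force_dilemma_overlap} are fixed, both inequalities drop out with no further calculation, and no machinery beyond Lemma~\ref{lem:vel_boundary_R1v1} and the preceding remark is required. Note that the individual inequality $\dot{d}_1 \geq 0$ does \emph{not} hold in general, since a CW Turret control can shrink $d_1$; the sum is preserved precisely because any such approach toward $\reachone$ costs the Turret an equal (or greater) angular gap to $\reachtwo$.
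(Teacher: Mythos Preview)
Your argument is correct and follows essentially the same route as the paper: you use the unit-speed recession of the $\reachtwo$ boundary (the paper cites Lemma~\ref{lem:2v1boundary} directly, you cite the remark that repackages it) to get $\dot{\theta}_{B2}=1$ and hence $\dot d_2\ge 0$, and then you invoke Lemma~\ref{lem:vel_boundary_R1v1} to bound $|\dot{\theta}_{B1}|\le 1$ and conclude $\dot d_1+\dot d_2=\dot{\theta}_{B2}-\dot{\theta}_{B1}\ge 0$. The paper phrases the second step as the equivalence $\dot d_1+\dot d_2\ge 0 \Leftrightarrow \dot{\theta}_{B1}\le 1$ and appeals to the characterization of when the $\reachone$ boundary attains unit speed, but the content is identical.
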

\begin{proof}
From Lemma~\ref{lem:2v1boundary}, the boundary of $\reachtwo$ moves away from the Turret at unit speed in the inertial frame: $\dot{\theta}_{B2}=1$. 
This leads to $\dot{d}_2=-\omega_T + 1$.
Noting that $\lvert \omega_T \rvert \leq 1$, the above equality proves the first part: $\dot{d}_2\geq0$.

For the second part, it suffices to show that $\dot{d}_1 + \dot{d}_2 \geq 0 \Leftrightarrow \dot{d}_1 - \omega_T + 1\geq 0 \Leftrightarrow \omega_T - \dot{d}_1 \leq 1 \Leftrightarrow \dot{\theta}_{B1} \leq 1$, where the quantity $\dot{\theta}_{B1} \triangleq \omega_T - \dot{d}_1$ is the velocity at which the boundary of $\reachone$ moves.
The condition $\dot{\theta}_{B1} \leq 1$ is equivalent to saying that this boundary (in the inertial frame) either moves towards the Turret at a speed less than or equal to unity or moves away from the Turret. From Lemma~\ref{lem:vel_boundary_R1v1}, this boundary only moves toward the Turret at unity speed if the Attackers move in the direction anti-parallel that of the SS-FH strategy.
\end{proof}

The following Lemma shows a similar result for the \AttackerStrategy.
\begin{lemma}\label{lem:d1d2behavior_b}
Suppose the Attackers play the \AttackerStrategy.
Then we have $\dot{d}_1\geq 0$ for any admissible Turret strategy. 
Moreover, we have
\begin{equation}
    \dot{d}_1+\dot{d}_2\geq 0. \notag
\end{equation}
\end{lemma}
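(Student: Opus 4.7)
My plan is to mirror the proof of Lemma~\ref{lem:d1d2behavior_a} with the roles of the 1v1 and 2v1 objects swapped. The first inequality will come from the 1v1 side via Lemma~\ref{lem:vel_boundary_R1v1}, and the sum inequality will follow from the analogous upper bound on the inward speed of the 2v1 boundary.

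For $\dot d_1 \geq 0$, I would invoke Lemma~\ref{lem:vel_boundary_R1v1}, which identifies the \AttackerStrategy{} as the unique information-limiting policy that shrinks $\reachone$ at the maximum (unit) rate in the inertial frame. In the configuration of Fig.~\ref{fig:force_dilemma_overlap}, this forces the relevant boundary to satisfy $\dot\theta_{B1}=-1$, so substituting into $\dot d_1 = \omega_T - \dot\theta_{B1}$ gives $\dot d_1 = \omega_T + 1 \geq 0$ for every admissible $\omega_T \in [-1,1]$.

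For the sum, adding the two expressions cancels the Turret's control,
\begin{equation*}
    \dot d_1 + \dot d_2 = (\omega_T - \dot\theta_{B1}) + (-\omega_T + \dot\theta_{B2}) = \dot\theta_{B2} - \dot\theta_{B1} = \dot\theta_{B2} + 1,
\end{equation*}
so the claim reduces to $\dot\theta_{B2} \geq -1$, i.e., the 2v1 boundary cannot move toward the Turret faster than unit angular speed under any information-limiting Attacker velocity. I would establish this as the 2v1 analog of Lemma~\ref{lem:vel_boundary_R1v1} by promoting Lemma~\ref{lem:2v1boundary} from an equality along the 2v1 slow optimal trajectory to a two-sided bound. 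Specifically, the implicit function theorem applied to $\vtwo = 0$ yields $\nabla_{\pos{A_k}} \theta_{B2} = -\nabla_{\pos{A_k}} \vtwo \,/\, (\partial \vtwo/\partial\theta_T)$, and the HJI saddle-point identity on the barrier gives $|\partial\vtwo/\partial\theta_T| = \nuslow \bigl(|\nabla_{\pos{A_1}}\vtwo| + |\nabla_{\pos{A_2}}\vtwo|\bigr)$. Combining these with Cauchy-Schwarz and $\|\vel{A_k}\| \leq \nuslow$ delivers $|\dot\theta_{B2}| \leq 1$, hence $\dot\theta_{B2}\geq -1$ and the sum inequality.

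The hard part is this two-sided bound on $\dot\theta_{B2}$. Lemma~\ref{lem:2v1boundary} pins the rate only along the specific 2v1 slow optimal Attacker trajectory, and since $\vtwo$ is defined only implicitly through the transcendental relation~\eqref{eq:Runnercap}, no direct closed-form differentiation is available. The HJI/Cauchy-Schwarz route above is the cleanest way around this; a geometric alternative is a semipermeability argument noting that any faster inward barrier motion would be untrackable by the Turret under its optimal 2v1 response, contradicting Lemma~\ref{lem:2v1boundary}. Once $|\dot\theta_{B2}|\leq 1$ is secured, both parts of the lemma follow by direct substitution, exactly parallel to Lemma~\ref{lem:d1d2behavior_a}.
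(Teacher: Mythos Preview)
Your treatment of $\dot d_1\ge 0$ matches the paper exactly. For $\dot d_1+\dot d_2\ge 0$, both you and the paper reduce the claim to a bound on $\dot\theta_{B2}$, but the routes diverge from there. The paper never invokes the Isaacs equation; instead it differentiates the constraint $\vtwo(\theta_B)\equiv 0$ in time, tabulates the signs of all the partials of $\vone$ and $\Runnercap$ (including the transcendental $\partial\Runnercap/\partial\theta_T<-1$ worked out in the Appendix) together with the signs of $\dot r_P,\dot\theta_P,\dot r_R,\dot\theta_R$ under the specific SS-FH headings, and concludes the \emph{stronger} fact $\dot\theta_{B2}>0$: the 2v1 boundary actually retreats from the Turret, not merely at sub-unit speed. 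Your HJI/Cauchy--Schwarz argument yields only $|\dot\theta_{B2}|\le 1$, but it applies to \emph{every} information-limiting Attacker velocity, so it really is the 2v1 analogue of Lemma~\ref{lem:vel_boundary_R1v1} you set out to build. What your route buys is generality and brevity; what the paper's route buys is a sharper geometric conclusion and a self-contained computation that does not appeal to the expression in~\eqref{eq:V2v1} satisfying the Isaacs PDE (though away from the singularity already excluded in Sec.~\ref{sec:complete_info} the two are equivalent).
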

\begin{proof}
From Lemma~\ref{lem:vel_boundary_R1v1}, the boundary of $\reachone$ moves away from the Turret at unit speed when the Attackers use the \AttackerStrategy,
which proves $\dot{d}_1\geq 0$.
Similar to the proof of Lemma~\ref{lem:d1d2behavior_a}, we are left to prove that the boundary of $\reachtwo$ does not move towards the Turret any faster than unit speed ($\dot{\theta}_{B2}\geq-1$).
In the following, we will show that the \AttackerStrategy\ ensures that the boundary of $\reachtwo$ in fact moves away from the Turret ($\dot{\theta}_{B2}>0$), which is more than sufficient to prove $\dot{d}_1+\dot{d}_2\geq 0$.

Let $\theta_B$ be a fictitious Turret position that coincides with the boundary of $\reachtwo$ on the CW side, satisfying
\begin{equation}\label{eq:const_boundary}
    \vtwo = \vone(r_P,\theta_P,\theta_B) + 2\Runnercap(r_R,\theta_R,\theta_B) \equiv 0,
\end{equation}
where $_P$ and $_R$ denote the Penetrator and Runner respectively.
Without loss of generality, suppose that the Penetrator is on the CW side and the Runner is on the CCW side of the Turret.
Then we have $\frac{\partial \vone}{\partial \theta_B} = 1$ and $\frac{\partial \Runnercap}{\partial \theta_B} < -1$,\footnote{This can be shown by differentiating \eqref{eq:Runnercap} as shown in the \red{Appendix}.} which results in
\begin{equation}\label{eq:turretBenefitsCloserToRunner}
    \frac{\partial \vtwo}{\partial \theta_B} < 0.
\end{equation}
The above inequality implies that a fictitious Turret benefits from starting closer to the Runner.

For the constraint \eqref{eq:const_boundary} to be satisfied through time (i.e., $\theta_B(t)=\theta_{B2}(t)$), $\theta_B$ must also satisfy $\dot{V}_\text{2v1}\equiv 0$:
\begin{eqnarray}\label{eq:const_boundary_partial}
    \dot{V}_\text{2v1}&=&\frac{\partial \vone}{\partial r_P} \dot{r}_P
    + \frac{\partial \vone}{\partial \theta_P} \dot{\theta}_P
    + \frac{\partial \vone}{\partial \theta_B} \dot{\theta}_B  \notag \\
    &+& 2\left[
    \frac{\partial \Runnercap}{\partial r_R} \dot{r}_R
    + \frac{\partial \Runnercap}{\partial \theta_R} \dot{\theta}_R
    + \frac{\partial \Runnercap}{\partial \theta_B} \dot{\theta}_B
    \right]
    \equiv 0.
\end{eqnarray}
Since moving farther out is worse for the Attacker, we have the following:
$\frac{\partial \vone}{\partial r_P}<0,\quad \frac{\partial \Runnercap}{\partial r_R} <0.$
Due to the assumed arrangement of the Penetrator and the Runner, the partials satisfy the following: 
$\frac{\partial \vone}{\partial \theta_P}<0,\quad \frac{\partial \Runnercap}{\partial \theta_R} >0.$
Finally, the \AttackerStrategy\ gives the following signs on the time derivative of the states:
$    \dot{r}_P<0,\quad \dot{\theta}_P<0,\quad \dot{r}_R<0,\quad \dot{\theta}_R>0.$

The condition \eqref{eq:const_boundary_partial} with the above inequalities provide the following result:
\begin{eqnarray}
    \text{[positive terms]} + \dot{\theta}_B\left( \frac{\partial \vone}{\partial \theta_B} + 2\cdot\frac{\partial \Runnercap}{\partial \theta_B} \right) \equiv 0 \notag \\
    \Leftrightarrow \dot{\theta}_B \cdot \frac{\partial \vtwo}{\theta_B} < 0.
\end{eqnarray}

From~\eqref{eq:turretBenefitsCloserToRunner}, $\frac{\partial \vtwo}{\theta_B} < 0$, it must be that $\dot{\theta}_B > 0$. Furthermore, since both \eqref{eq:const_boundary} and \eqref{eq:turretBenefitsCloserToRunner} are satisfied at all times, $\theta_B(t)=\theta_{B2}(t)$, and therefore the boundary of $\reachtwo$ moves away from the Turret under the \AttackerStrategy, i.e., $\dot{\theta}_{B2}>0$.
\end{proof}

One final lemma shows that we can combine the two Attacker strategies from Lemmas~\ref{lem:d1d2behavior_a} and~\ref{lem:d1d2behavior_b}, and still ensure that the overlap regions disappear in finite time, i.e., that the game necessarily reaches the guaranteed dilemma case of Section~\ref{sec:guaranteed_dilemma}.

\begin{lemma}\label{lem:arbitrarySwitchingFinite}
    Under arbitrary switching between the 2v1 slow strategy and SS-FH strategy, both regions $\reachone$ and $\reachtwo$ vanish in finite time.
\end{lemma}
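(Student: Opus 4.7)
The plan is to track the angular widths $|\reachone|$ and $|\reachtwo|$ under arbitrary combinations of the two Attacker strategies and show that, for any switching schedule the Attacker chooses that allocates sufficient total time to each strategy, both widths are driven to zero in finite time. The key ingredients are (i) strict shrinkage of $|\reachone|$ during SS-FH phases and of $|\reachtwo|$ during 2v1-slow phases, and (ii) non-expansion (or controllably bounded expansion) of each width during the other phase.

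First I would establish the strict shrinkage results. Applying the formula \eqref{eq:thetaUB_dot} symmetrically to the two boundaries of $\reachone$ shows that under the \AttackerStrategy{} both boundaries collapse toward the common center $\cone$ at unit angular speed, giving $\dot{|\reachone|}=-2$. Analogously, Lemma~\ref{lem:2v1boundary} shows that under the 2v1-slow equilibrium play the relevant boundary of $\region{A_i,A_j}(\nuslow)$ moves at unit angular speed; combining this with the symmetric analysis of $\region{A_j,A_i}(\nuslow)$ yields $\dot{|\reachtwo|}\leq -c$ for some fixed $c>0$ throughout any 2v1-slow phase.

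Next I would bound the ``cross'' dynamics, i.e., the evolution of each region during phases in which the \emph{other} strategy is used. Under the 2v1-slow strategy the Attackers' speeds are capped at $\nuslow$, so \eqref{eq:thetaUB_dot} combined with Lemma~\ref{lem:1v1boundary} bounds the rate of change of $|\reachone|$ in absolute value by a constant strictly less than $2$. The symmetric claim that $|\reachtwo|$ does not grow during SS-FH is more delicate, since SS-FH is not the 2v1-slow equilibrium and Lemma~\ref{lem:2v1boundary} does not apply. To handle this I would reuse the implicit-function argument from the proof of Lemma~\ref{lem:d1d2behavior_b}, extending it to the far boundary of $\reachtwo$ with sign conventions flipped. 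The SS-FH kinematics---Attackers moving toward the target with heading tangent to the $\nufast$ circle---supply the same sign pattern on $\dot r_{A_i}$ and $\dot\theta_{A_i}$ used in that proof, so both boundaries of $\reachtwo$ retreat from the interior of the region, yielding $\dot{|\reachtwo|}\leq 0$.

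Combining the four bounds above, the Attacker selects any switching schedule (for instance, alternating on fixed intervals) whose total per-strategy durations dominate the initial widths so that the net shrinkage in both $|\reachone|$ and $|\reachtwo|$ is strictly positive; such a schedule drives both widths to zero in finite time. I expect the main obstacle to be the non-expansion argument for $\reachtwo$ during SS-FH, which requires careful bookkeeping of partial derivatives of $\vtwo$ along a non-equilibrium Attacker trajectory; it is the analog, for the far boundary of $\reachtwo$, of the calculation already carried out for the near boundary in the proof of Lemma~\ref{lem:d1d2behavior_b}.
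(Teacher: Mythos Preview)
Your approach is genuinely different from the paper's, and it has a real gap. The paper's proof does not track the widths $|\reachone|$ and $|\reachtwo|$ at all. Instead it observes that each of the three possible Attacker behaviors (SS-FH, 2v1-slow Runner, 2v1-slow Penetrator) produces a constant strictly negative radial rate $\dot r_{A_i}$. Taking the least negative of these three constants gives a uniform lower bound on the speed at which both Attackers approach the perimeter, regardless of which team strategy is active at any instant. Hence at least one Attacker reaches $r=1$ in finite time; at that moment $\region{A_i}(\nufast)$ and $\region{A_i,A_j}(\nuslow)$ collapse, so both intersections $\intersectone(\nufast)$ and $\intersecttwo(\nuslow)$---and therefore $\reachone$ and $\reachtwo$---are empty. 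This argument is a few lines and works for \emph{every} switching schedule, including the degenerate ones that stay in a single mode forever.

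Your plan, by contrast, only establishes the conclusion for schedules that ``allocate sufficient total time to each strategy.'' That is strictly weaker than the lemma as stated: if the Attackers use SS-FH exclusively, your bounds give $\dot{|\reachtwo|}\le 0$ but not strict decay, so you cannot conclude that $\reachtwo$ vanishes. The paper's radial argument handles this case automatically. There is also a technical wrinkle in your shrinkage claim for $\reachtwo$ during 2v1-slow phases: Lemma~\ref{lem:2v1boundary} covers the boundary of $\region{A_i,A_j}(\nuslow)$ only when the Attackers play the equilibrium for the capture order $A_i\!\to\!A_j$. The other boundary of $\intersecttwo$ comes from $\region{A_j,A_i}(\nuslow)$, for which the roles are reversed and the Attackers are \emph{not} playing the corresponding equilibrium, so Lemma~\ref{lem:2v1boundary} does not directly yield unit-speed motion of that boundary. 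You would need a separate argument there, whereas the paper's radial-distance approach sidesteps all of this bookkeeping.
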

\begin{proof}
    There are three relevant behaviors to consider: the SS-FH behavior, the 2v1 slow Runner behavior, and the 2v1 slow Penetrator behavior. Each corresponds to a constant, negative $\dot{r}$. If we take the maximum (least negative) of these rates, both Attackers are always approaching the perimeter at least as fast as this finite rate, regardless of which team strategy is active. Therefore, in finite time, at least one attacker will reach the perimeter. Suppose without loss of generality, it is Attacker $i$ that reaches the perimeter. At that time, the regions $\region{A_i}$ and $\region{A_i,A_j}$ must vanish, and consequently, the intersection regions cannot exit.
\end{proof}

\begin{theorem}\label{thm:forcing_dilemma}
Consider the initial configuration that satisfies \eqref{eq:nec_for_dilemma} and \eqref{eq:mismatch_overlap}.
If the Turret is not in the 1v1 overlap reachability set:
\begin{equation}
    \theta_T \notin \reachone(\nufast),
    \label{eq:guaranteed_dilemma_sufficient}
\end{equation}
then the Attackers have an information-limiting strategy to drive the system into the dilemma scenario in Theorem~\ref{thm:guaranteed_dilemma}.
\end{theorem}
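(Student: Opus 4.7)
The plan is to construct a switching information-limiting Attacker policy that keeps $\theta_T(t)$ outside both $\reachone$ and $\reachtwo$, shrinks the two overlap regions until they vanish, and reduces the configuration to the hypothesis of Theorem~\ref{thm:guaranteed_dilemma}. I would introduce angular gaps $d_1(t), d_2(t)$ from $\theta_T$ to the boundaries of $\reachone$ and $\reachtwo$ as in Fig.~\ref{fig:force_dilemma_overlap}, with the convention $d_k := +\infty$ whenever the corresponding intersection is empty. Hypothesis~\eqref{eq:guaranteed_dilemma_sufficient} gives $d_1(0) > 0$; if additionally $\theta_T(0) \in \reachtwo$ (only possible in Case~3), I would first play SS-FH to restore $d_2 > 0$, leveraging the stronger claim inside the proof of Lemma~\ref{lem:d1d2behavior_b} that SS-FH pushes the boundary of $\reachtwo$ outward (so $\reachtwo$ translates past the Turret in finite time while $d_1$ remains nondecreasing).

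With $d_1, d_2 > 0$, I would define the switching rule: play SS-FH on $\{t : d_1 \leq d_2\}$ and the 2v1 slow strategy~\eqref{eq:optimal_2v1_Attacker} on $\{t : d_1 > d_2\}$; both components are information-limiting. Lemma~\ref{lem:d1d2behavior_b} gives $\dot{d}_1 \geq 0$ under SS-FH, while Lemma~\ref{lem:d1d2behavior_a} gives $\dot{d}_2 \geq 0$ under 2v1 slow, so the Turret cannot enter whichever reachable set it is currently nearer to. A Filippov-style viability argument on the switching surface $\{d_1 = d_2\}$, leveraging the ``Moreover'' clause $\dot{d}_1 + \dot{d}_2 \geq 0$ under \emph{both} strategies, shows that $d_1 + d_2$ is monotonically nondecreasing; hence $d_1 + d_2 > 0$ throughout, and on the sliding surface $d_1 = d_2 > 0$, ruling out the Turret sneaking into either set. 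Lemma~\ref{lem:arbitrarySwitchingFinite} then furnishes a finite time $t^\star$ at which $\intersecttwo(\nuslow) = \intersectone(\nufast) = \emptyset$, giving condition~\eqref{eq:no_overlap}; from $t^\star$ onward, the Attackers commit to SS-FH and Theorem~\ref{thm:guaranteed_dilemma} forces the dilemma.

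The main obstacle I anticipate is verifying that the mismatched overlap condition~\eqref{eq:mismatch_overlap} itself is preserved up to $t^\star$. Lemmas~\ref{lem:d1d2behavior_a}--\ref{lem:d1d2behavior_b} directly bound only the boundaries of $\reachone$ and $\reachtwo$, not the boundaries of $\region{A_i,A_j}(\nuslow)$ and $\region{A_j}(\nufast)$ whose intersection defines the mismatched overlap. The expected resolution is geometric: any exit of $\theta_T$ from the mismatched overlap while $d_1, d_2 \geq 0$ must drive $\theta_T$ into a matching overlap region, which is itself contained in $\reachone$ or $\reachtwo$ and contradicts the invariance just established. Making this rigorous will likely require a careful sign analysis of the boundary velocities from Lemmas~\ref{lem:1v1boundary} and~\ref{lem:2v1boundary}, together with a case split over the three sub-cases of Sec.~\ref{sec:intermediate}.
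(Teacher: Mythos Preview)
Your approach is essentially the paper's: the same switching rule (SS-FH when $d_1\le d_2$, 2v1-slow when $d_2<d_1$), the same appeal to Lemmas~\ref{lem:d1d2behavior_a}--\ref{lem:arbitrarySwitchingFinite}, and the same two-part sufficiency claim (overlaps vanish in finite time; $d_1,d_2$ stay positive). The paper packages the invariance argument as a forward-invariant set $\Omega=\{\min\{d_1,d_2\}>d_{\min},\ d_1+d_2>s_{\min}\}$ rather than a Filippov/viability discussion, but the content is the same.

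Two remarks on where you diverge. First, your preliminary phase to handle $\theta_T(0)\in\reachtwo$ is unnecessary: the paper's opening line observes that \eqref{eq:nec_for_dilemma} and \eqref{eq:mismatch_overlap} already force $\theta_T\notin\reachtwo$, because $\reachtwo=\intersecttwo(\nuslow)$ and membership there together with $\theta_T\in\region{A_j}(\nufast)$ would yield the matching configuration \eqref{eq:matching_directions}, which is excluded by the standing assumption of Section~\ref{sec:intermediate}. Second, the ``main obstacle'' you anticipate---preservation of \eqref{eq:mismatch_overlap} up to the moment the overlaps vanish---is a point the paper's proof simply does not address; it declares that (i) and (ii) suffice and stops. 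So your concern reflects more care than the paper, not a gap in your plan; your proposed geometric resolution (exit from the mismatched overlap forces entry into a matching overlap, hence into $\reachone$ or $\reachtwo$, contradicting $d_1,d_2>0$) is the natural patch.
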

\begin{proof}
First, note that \eqref{eq:nec_for_dilemma} and \eqref{eq:mismatch_overlap} imply that 
\begin{equation}
    \theta_T \notin \reachtwo(\nuslow).
\end{equation}
It suffices to show the following two: (i) the overlap regions become empty in finite time; and (ii) $d_1>0$ and $d_2>0$ for all time.
We will prove that the above holds with the following Attacker strategy:
\begin{itemize}
    \item[a1)] Play 2v1 slow game if $d_2<d_1$, and
    \item[a2)] Play the \AttackerStrategy\ if $d_1 \leq d_2$.
\end{itemize}
Note that this is not the only strategy, but there is a class of strategies that leads to the same guarantee. We can immediately note that (i) is a trivial consequence of Lemma~\ref{lem:arbitrarySwitchingFinite} and proceed to showing (ii).
\begin{figure}[ht]
    \centering
    \includegraphics[width=0.49\textwidth]{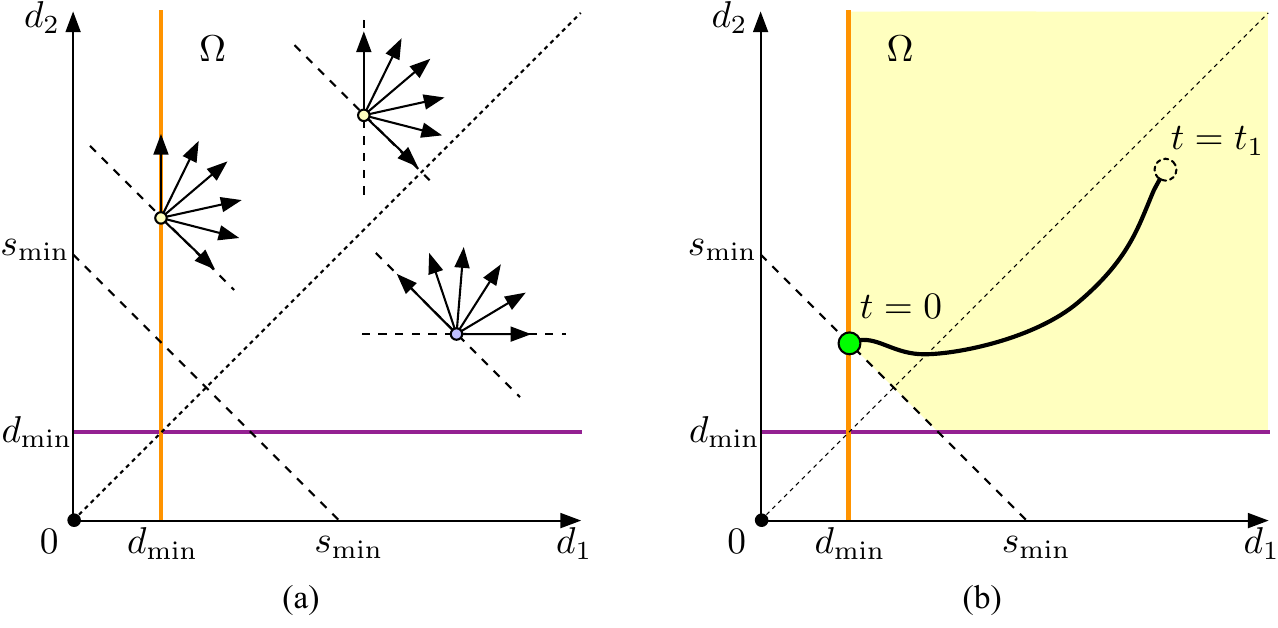}
    \caption{Illustration of the evolution of $d_1(t)$ and $d_2(t)$, and the forward invariant set $\Omega$. (a) A vectogram resulting from Lemma~\ref{lem:d1d2behavior_a} and~\ref{lem:d1d2behavior_b}. (b) The time $t_1$ indicates when either $\intersectone$ or $\intersecttwo$ disappears.}
    \label{fig:d1d2cartoon}
\end{figure}

Given an initial condition $d_1(0)$ and $d_2(0)$, consider the following set in the $d_1d_2$-space (also see Fig.~\ref{fig:d1d2cartoon}):
\begin{equation}
    \Omega = \{[d_1,d_2]\;|\; \min\{d_1,d_2\}>d_\text{min},\; d_1+d_2>s_\text{min} \},
\end{equation}
where $d_\text{min} = \min\{d_1(0),d_2(0)\}$ and $s_\text{min} = d_1(0)+d_2(0)$.
Based on Lemmas~\ref{lem:d1d2behavior_a} and \ref{lem:d1d2behavior_b}, the set $\Omega$ is forward invariant.
Therefore, neither $d_1$ or $d_2$ becomes less than $d_\text{min}$, implying that the Turret can never enter the overlap regions.
%
\end{proof}


\section{Numerical Results}
This section demonstrates the theoretical results of the paper through simulations. The animated version of the figures are provided in a video available at YouTube \url{https://youtu.be/n0C0UZqJmMU}.

\subsection{Attackers Forcing Dilemma}
\noindent
Here we demonstrate the scenario studied in Sec.~\ref{sec:forcing_dilemma}, where the overlap regions are nonempty but the Attackers have a strategy to ensure that the Turret does not reach there.
Figure~\ref{fig:d1d2_simulation} shows how $d_1(t)$ and $d_2(t)$ evolve over time when the Attackers use the strategy introduced in Theorem~\ref{thm:forcing_dilemma}.
\begin{figure}
    \centering
    \includegraphics[width = 0.48\textwidth]{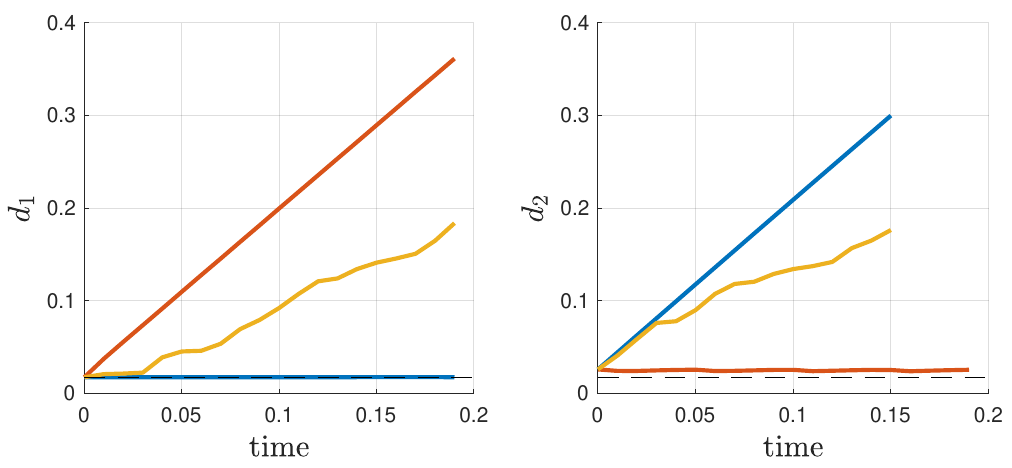}
    \caption{The evolution of $d_1(t)$ and $d_2(t)$ when the Attackers use the strategy described in Theorem~\ref{thm:forcing_dilemma}, and the Turret uses three different strategies: seek the $\reachtwo$ boundary (red), seek the $\reachone$ boundary (blue), and random walk (yellow).}
    \label{fig:d1d2_simulation}
\end{figure}
Three candidate Turret strategies are shown: (i) $\reachone$ seeking, (ii) $\reachtwo$ seeking, and (iii) random walk.
As was proved in the theorem, it can be seen that no Turret strategy results in $d_1(t)$ or $d_2(t)$ reaching a value below $d_\text{min}=\min\{d_1(0),d_2(0)\}$.

\subsection{Initial Conditions}
\noindent
Figure~\ref{fig:numerical_regions} shows how the initial position of $A_2$ changes the existence of dilemma for a given $\theta_T$ and $\pos{A_1}$. 
\begin{figure}
    \centering
    \includegraphics[width=0.47\textwidth]{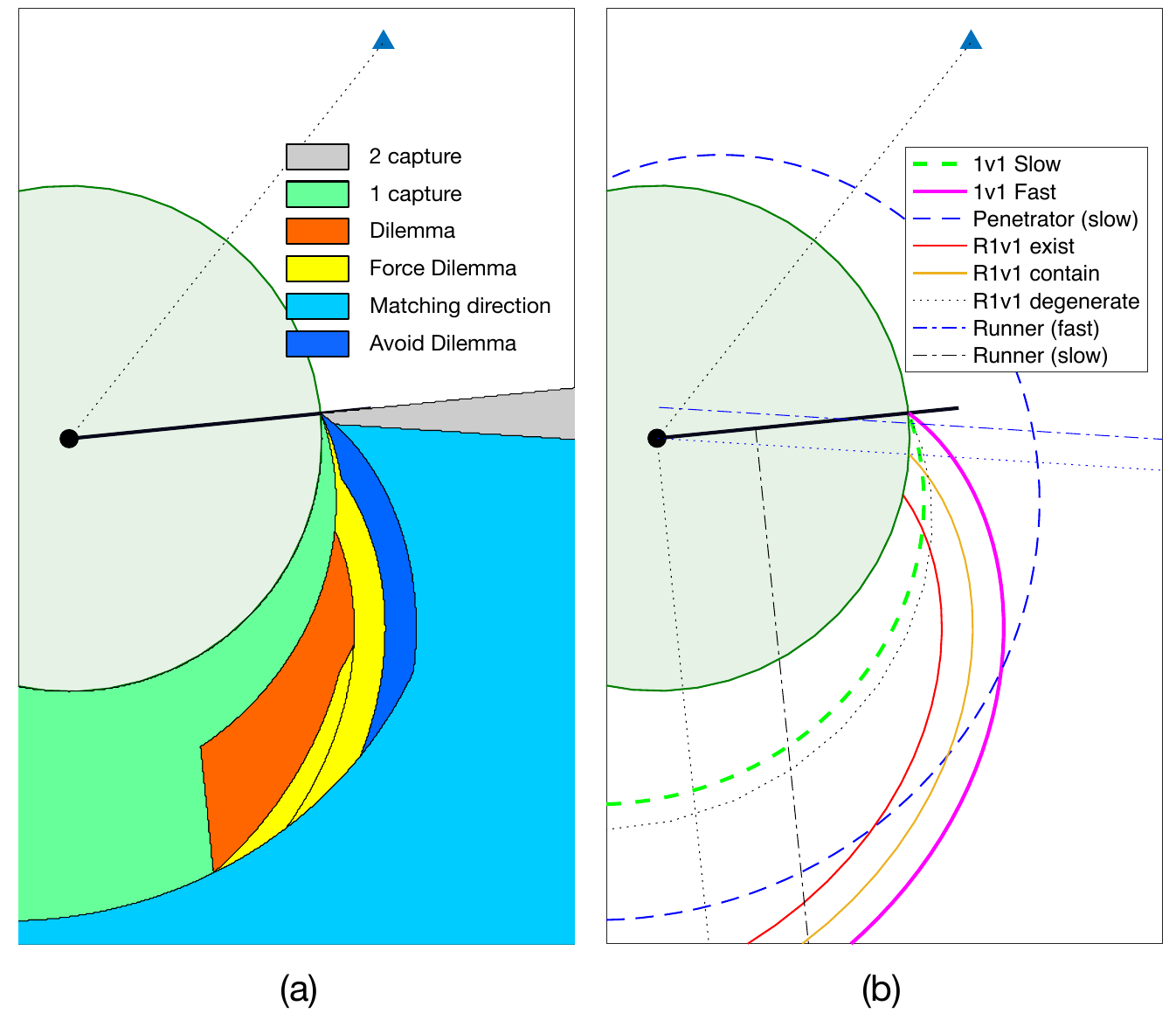}
    \caption{The effect of initial condition on the game cases. The position of $A_2$ is varied for a fixed $\x_{A_1}$ and $\theta_T$. }
    \label{fig:numerical_regions}
\end{figure}
Note that the region above the Turret's pointing angle is omitted since it leads to a degenerate case where CCW motion is optimal against both Attackers.

\paragraph*{Trivial cases} The gray region corresponds to the set of $A_2$ positions for which the Turret has a guarantee to achieve two captures even if the Attackers are fast: i.e., $\theta_T \in \uniontwo(\nufast)$. In the situation shown in the figure, the optimal Turret strategy is to capture $A_2$ first.  The boundary of this region consists of the critical positions for (i) $A_2$ to win the 1v1 fast game, and (ii) $A_2$ to waste enough time as a Runner (the boundary `Runner (fast)' in Fig.~\ref{fig:numerical_regions}b).

The light-green region is another trivial case where 1 and only 1 capture is possible: i.e., $\theta_T \in \unionone(\nufast)\setminus  \uniontwo(\nuslow)$.  Noting that the given $A_1$ position is such that $\theta_T\in \region{A_1}(\nufast)$, the Turret can guarantee at least 1 capture by pursuing $A_1$.  For the Turret to be able to achieve only 1 capture, at least one Attacker must be able to survive even when it is slow. For this situation to arise, the union of the following conditions is necessary and sufficient: (i) $A_2$ can breach alone even when it is slow (the boundary `1v1 Slow' in Fig.~\ref{fig:numerical_regions}b); and (ii) the Attackers can guarantee 1 breach in 2v1 slow game. The latter is equivalent to the intersection of the following two: (ii-a) $A_2$ as a Runner can produce sufficient advantage for $A_1$ to breach if the Turret captures $A_2$ first (the boundary `Runner (slow)' in Fig.~\ref{fig:numerical_regions}b); and (ii-b) $A_2$ as a Penetrator can breach if the Turret captures $A_1$ first (the boundary `Penetrator (slow)' in Fig.~\ref{fig:numerical_regions}b).

\paragraph*{Matching direction}
The light blue region in Fig.~\ref{fig:numerical_regions} depicts the set of $A_2$ positions for which the \emph{matching direction} condition in Lemma~\ref{lem:matching_directions} applies.
This region is the union of the following two cases.
(Note that in either case, the Turret achieves 1 capture against fast Attackers and 2 captures against slow Attackers.)
When $A_2$ is outside of the `Penetrator (slow)' boundary or the `1v1 Fast' boundary, then the turret can move in the appropriate ``matched direction".

\paragraph*{Dilemma Situations}
The interesting dilemma cases arise when $A_2$ is in either of the three regions: `Dilemma', `Force Dilemma', and `Avoid Dilemma' in Fig.~\ref{fig:numerical_regions}a.
The situation described in Sec.~\ref{sec:guaranteed_dilemma} occurs when $A_2$ is in the red region.
This guaranteed dilemma situation changes to a more subtle intermediate configuration discussed in Sec.~\ref{sec:intermediate} when the intersection regions ($\intersectone$ or $\intersecttwo$) are nonempty.
Both in yellow and blue regions in Fig.~\ref{fig:numerical_regions}a, we have either $\intersectone\neq\emptyset$ or $\intersecttwo\neq\emptyset$.

The condition for the existence of $\intersectone$ (and $\reachone$) can be expressed analytically (see \red{Appendix}), and is depicted as `R1v1 exist' in Fig.~\ref{fig:numerical_regions}b.
However, the existence of $\reachtwo$ (lower boundary of 'Dilemma' and 'Force Dilemma' in Fig.~\ref{fig:numerical_regions}a) involves the solution to a transcendental equation and cannot be expressed in closed form.
Finally, the `Force Dilemma' region (yellow) transitions to the `Avoid Dilemma' region (dark blue) when the reachability set contains $\theta_T$. For the combination of $\x_{A_1}$ and $\theta_T$ in this example, we only have $\theta_T \in \reachone$, and this is shown in the blue region.
The boundary between `Force Dilemma' and `Avoid Dilemma' is given by the $A_2$ positions for which the boundary of $\reachone$ coincides with $\theta_T$.
See the \red{Appendix} for a more detailed explanation for this boundary.

While the existence of the above regions as well as their relative sizes vary with $\x_{A_1}$, $\theta_T$, $\nuslow$, and $\nufast$, the example shown here demonstrates that the dilemma situation studied in this paper is not a rare corner case. 


\section{Conclusion}
We formulated a Turret Attacker differential game with uncertainties in the agent speed.
We then proposed a sufficient condition for a strategy to be \emph{information limiting}, and used that to construct an Attacker strategy that forces the Turret to face a dilemma in selecting its strategy.
For a given set of game parameters, we derived the partition of the state space into various cases concerning both the game outcome and also the existence of deception.
Our work serves as an important case study to show how deception can be implemented in a two-sided differential game against an opponent who is not naive to the possibility that he may be being deceived from a position of information disadvantage.

\section*{APPENDIX}
\subsection{Proof of $\frac{\partial \Runnercap}{\partial \theta_T} < -1$}
Noting that
\begin{equation}
    \frac{\partial\theta_{A_i/T}}{\partial \theta_T} = -1,
\end{equation}
we have
\begin{align}
    \frac{\partial\Runnercap}{\partial \theta_T} &= -\frac{\partial\Runnercap}{\partial \theta_{A_i/T}} \\
    &= -\left(\frac{\partial\Runnercap}{\partial \theta_{A_i/T} }- 1\right) \frac{r_{A_i}}{\nu}\cos\left(\Runnercap - \theta_{A_i/T} \right).
\end{align}
Since we have $r_{A_i}\cos\left(\Runnercap - \theta_{A_i/T} \right)>1$ due to the assumption that the Runner can be captured outside of the perimeter, we have
\begin{equation}
    a \triangleq \frac{r_{A_i}}{\nu}\cos\left(\Runnercap - \theta_{A_i/T} \right) > 1. 
\end{equation}
Now the original expression reduces to
\begin{equation}
    \frac{\partial\Runnercap}{\partial \theta_T} = \frac{-a}{a-1} < -1,
\end{equation}
which completes the proof.


\subsection{Expressions for the boundaries}

\subsubsection{Existence of $\reachone$}
Recall that $\reachone$ exists iff $\intersectone$ exists.
For $\intersectone$ to exist, the CW boundary of $\region{A_1}$ must be on the CW side of the CCW boundary of $\region{A_2}$.
The critical case is when those two boundaries coincide, which can be expressed as
\begin{equation}
    \theta_{A_1} - \widthone(r_{A_1};\nufast) = \theta_{A_2} + \widthone(r_{A_2};\nufast)
\end{equation}
where $\widthone(r_{A_i};\nu)$ is the width of $\region{A_i}$ defined in \eqref{eq:widthone}.
The set of critical $A_2$ positions can then be expressed in polar coordinates as follows:
\begin{equation}
    \theta_{A_2} = c_E-F(r_{A_2};\nufast),
\end{equation}
where $c_E=\theta_{A_1}-F(r_{A_1};\nufast)-2F(1;\nufast)$ is a constant.
This gives `R1v1 exist' in Fig.~\ref{fig:numerical_regions}b.

\subsubsection{Avoid Dilemma}
A sufficient condition for the Turret to avoid dilemma is $\theta_T\in\reachone$ according to Theorem~\ref{thm:avoid_dilemma}.
The critical case is when $\theta_T = \theta_{B1}$, where $\theta_{B1}$ is the CCW boundary of $\reachone$ as shown in Fig.~\ref{fig:force_dilemma_overlap}.
The equation for this critical condition switches according to the two cases described in the following, and it causes the `Force Dilemma' region in Fig.~\ref{fig:numerical_regions}a (yellow) to have a non-smooth boundary with the `Avoid Dilemma' (blue) region.

\paragraph*{Nominal case}
The nominal case is when $\intersectone$ is bounded by $\theta_1 \triangleq \theta_{A_1}-\widthone(r_{A_1};\nuslow)$ and $\theta_2 \triangleq \theta_{A_2}-\widthone(r_{A_2};\nuslow)$ as shown in Fig.~\ref{fig:avoidDilemma}.
The boundary of $\reachone$ is given by
\begin{eqnarray}
    \theta_{B1} &=& \cone + \beta\wone,
\end{eqnarray}
where $\beta \triangleq 1/\alpha = \nufast/\nuslow$, $\cone = 0.5(\theta_1+\theta_2)$, $\wone = 0.5(\theta_2-\theta_1)$.
By substitution, the condition $\theta_T = \theta_{B1}$ becomes
\begin{equation}
    \theta_{A_2} = c_n - F(r_{A_1}; \nufast),
\end{equation}
where $c_n = \frac{1}{1+\beta}\left[2\theta_T-(1-\beta)(\theta_{A_1}-\widthone(r_{A_1};\nufast))\right]-F(1;\nufast)$ is a constant.

\paragraph*{Degenerate case}
The degenerate case is when $\region{A_2}$ is so small that it is entirely contained in $\region{A_1}$, which occurs when
\begin{equation}
    \theta_{A_1}-\widthone(r_{A_1};\nufast)\leq \theta_{A_2}-\widthone(r_{A_2};\nufast).
\end{equation}
In this case, we have $\intersectone=\region{A_2}$, and the center and the width of $\intersectone$ is $\cone = \theta_{A_2}$, $\wone = \widthone(r_{A_1};\nufast)$.
The critical condition $\theta_T = \theta_{B1}$ then becomes
\begin{equation}
    \theta_T = \theta_{A_2} + \beta \widthone(r_{A_1};\nufast).
\end{equation}
This is the dotted line `R1v1 degenerate' in Fig.~\ref{fig:numerical_regions}b.



\bibliographystyle{ieeetr}
\bibliography{ref}

\end{document}